\newcommand{\D}{\mathcal{D}}
\newcommand{\C}{\mathcal{C}}
\renewcommand{\L}{\mathcal{L}}
\newcommand{\N}{\mathcal{N}}
\newcommand{\E}{\mathbb{E}}
\newcommand{\R}{\mathbb{R}}
\newcommand{\A}{\mathcal{A}}
\newcommand{\X}{\mathbf{X}}
\newcommand{\x}{\mathbf{x}}
\renewcommand{\P}{\mathbb{P}}
\newcommand{\PrC}{\textsc{PriceCplx}}
\newcommand{\SaC}{\textsc{SampleCplx}}
\newcommand{\sign}{\textsf{sign}}
\newcommand{\abs}[1]{\vert #1 \vert}
\newcommand{\reg}{\mathcal{C}_{\textsc{REG}}}
\newcommand{\alldist}{\mathcal{C}_{\textsc{ALL}}}
\newcommand{\mhr}{\mathcal{C}_{\textsc{MHR}}}
\newcommand{\eps}{\epsilon}
\newcommand{\rev}{\textsc{Rev}}
\newcommand{\one}{\mathbf{1}}
\newcommand{\dist}
\newcommand{\wrev}{\widetilde{\rev}}
\newcommand{\dkl}{D_{\textrm{KL}}}
\newtheorem{theorem}{Theorem}[section]
\newtheorem{heuristic algorithm}{Heuristic Algorithm}
\newtheorem{lemma}[theorem]{Lemma}
\newtheorem{corollary}[theorem]{Corollary}
\providecommand{\customgenericname}{}
\newcommand{\newcustomtheorem}[2]{%
  \newenvironment{#1}[1]
  {%
   \renewcommand\customgenericname{#2}%
   \renewcommand\theinnercustomgeneric{##1}%
   \innercustomgeneric
  }
  {\endinnercustomgeneric}
}
\begin{document}

\title{Pricing Query Complexity of Revenue Maximization}
\author{
Renato Paes Leme \\ Google Research \\ {\tt renatoppl@google.com} \and 
Balasubramanian Sivan \\ Google Research \\ {\tt balusivan@google.com} \and
Yifeng Teng \\ Google Research \\ {\tt yifengt@google.com} \and 
Pratik Worah \\ Google Research  \\ {\tt pworah@google.com}
}
\date{}

\maketitle
\thispagestyle{empty}

\begin{abstract}
The common way to optimize auction and pricing systems is to set aside a small fraction of the traffic to run experiments. This leads to the question: how can we learn the most with the smallest amount of data? For truthful auctions, this is the \emph{sample complexity} problem. For posted price auctions, we no longer have access to samples. Instead, the algorithm is allowed to choose a price $p_t$; then for a fresh sample $v_t \sim \mathcal{D}$ we learn the sign $s_t = \sign(p_t - v_t) \in \{-1,+1\}$. How many pricing queries are needed to estimate a given parameter of the underlying distribution? 
 

We give tight upper and lower bounds on the  number of pricing queries required to find an approximately optimal reserve price for general, regular and MHR distributions. Interestingly, for regular distributions, the pricing query and sample complexities match. But for general and MHR distributions, we show a strict separation between them.

All known results on sample complexity for revenue optimization follow from a variant of using the optimal reserve price of the empirical distribution. In the pricing query complexity setting, we show that learning the entire distribution within an error of $\epsilon$ in Levy distance requires strictly more pricing queries than to estimate the reserve. Instead, our algorithm uses a new property we identify called \emph{relative flatness} to quickly zoom into the right region of the distribution to get the optimal pricing query complexity.

\end{abstract}

\newpage
\setcounter{page}{1}

\section{Introduction}
An important question in the intersection of economics and statistics is to estimate properties of a buyer's willingness-to-pay (a.k.a. value) from samples.  Samples typically come from buyer's bids in a truthful auction. In various economic setups though, collecting samples of true value is quite hard or impossible. For example, in many settings auction is not an option; often, posting prices is the only option. And the seller therefore has access only to a buyer's decision to buy or not at a given price. This is also true for auctions with very few buyers (not too uncommon in digital auctions as targeting becomes increasingly narrow and focused): if a buyer is the sole competitor, they may decide to bid just above the reserve price that is sent to them, or to not bid at all when their value is below the reserve price, in order to conceal information about their valuation. Likewise in non-truthful auctions, we don't have access to true value samples; we just have access to past bids. One thing we do have in all these scenarios is whether or not the true value exceeded a (reserve) price. Our goal in this paper to compare the power of \emph{value samples} (namely, samples from the buyer's value distribution) to that of \emph{pricing queries} (namely, a price $p$ and a response to whether or not the buyer's value is at least $p$).\\ 

What can we do with a single value sample? Dhangwatnotai et al.~\cite{DRY15} say that we can do a lot: we can price the buyer using that sample to extract \emph{half of the expected optimal revenue} achievable with full knowledge of the regular value distribution. Similarly a single value sample acts as an unbiased estimator of the mean of the distribution. Now consider a single pricing query. It is not possible to extract \emph{any non-zero fraction of the optimal revenue} from a single pricing query. Likewise there is nothing meaningful to estimate about the mean using a single pricing query. Both of these hold regardless of our choice of the price $p$. The contrast between the power of a single value sample and that of a single pricing query cannot be more stark. The question we investigate in this paper is what happens when we have a large number of pricing queries. How do the number of value samples required to get a $1-\epsilon$ approximation of the optimal revenue, or to estimate the mean within $\epsilon$ compare with the number of pricing queries required?\\

Let $\C$ be a class of distributions over $\R$ and $\theta:\C \rightarrow \R$ a parameter (such as the mean, the variance or the monopoly price). Let $\L:\R^2 \rightarrow \R$ be a loss function estimating the error in the parameter estimator\footnote{The loss can  implicitly depend on the distribution $\D \in \C$.}. We will consider the following parameters and respective losses:
\begin{itemize}
    \item \emph{Median:} $\theta(\D) = \text{median}(\D)$ with loss $\L(\theta; \hat\theta) = \abs{F_\D(\theta) - F_\D(\hat \theta)}$, where $F_\D$ is the c.d.f.
    \item \emph{Mean:} $\theta(\D) = \E_{v \sim \D}[v]$  with loss $\L(\theta; \hat\theta) = \abs{\theta - \hat \theta}$.
    \item \emph{Monopoly price:} $\theta(\D) = \text{argmax}_p \rev_\D(p) := \E_{v \sim \D} [p \cdot \one\{v \geq p\}]$ with the loss that compares the revenue at the optimal and estimated point: $\L(\theta; \hat \theta) = \abs{\rev_\D(\theta) - \rev_\D(\hat \theta)}$.
\end{itemize}
In our last example, the parameter is a function rather than a real number:

\begin{itemize}
    \item \emph{CDF:} $\theta(\D) = F_\D(\cdot)$  with loss $\L(F_\D; \hat F) = \textsf{LevyDistance}(F_\D; \hat F)$ which corresponds to the infimum over $\epsilon>0$ such that $F_\D(v-\epsilon)-\epsilon \leq \hat F(v) \leq F_\D(v+\epsilon)+\epsilon$ for all $v \in \R$.
\end{itemize}

A pricing estimator with $m$ queries consists of an algorithm that posts $m$ prices\footnote{We allow both values $v_t$ and prices $p_t$ to be negative. A negative value means that the buyer has a disutility for the good, while a negative price means that the buyer is compensated for acquiring the good. Negative values are allowed for convenience to obtain cleaner algorithms. The same results can be obtained for non-negative values by dealing with the corner cases around zero.} $p_1, p_2, \hdots, p_m \in \R$. For each price $p_t$ posted the algorithm observes if a buyer with a freshly drawn value buys or not at that price. In other words, we observe $s_t = \sign(v_t - p_t) \in \{-1,+1\}$ for some $v_t \sim \D$. The prices can be computed adaptively. Finally, the algorithm outputs an estimate $\hat \theta (s_1, \hdots, s_m)$. An $(\epsilon,\delta)$-pricing-query-estimator is an algorithm such that:
$$\P_{v_1, \hdots, v_m \sim \D} \left( \L(\theta(\D); \hat \theta (s_1, \hdots, s_m)) \leq \epsilon \right) \geq 1-\delta, \quad \forall \D \in \C$$

Given a class $\C$ and parameter $\theta$ and loss $\L$ we define the \emph{pricing query complexity} $\PrC_{\C,\theta}(\epsilon)$ to be the minimum $m$ such that there exists a $(\epsilon,1/2)$-pricing-query-estimator for $\C$, $\L$ and $\delta$.

It is useful to compare with traditional estimators and sample complexity. A traditional $(\epsilon,\delta)$-estimator is an algorithm that processes the samples $v_1, \hdots, v_m$ directly and produces an estimator $ \hat \theta (v_1, \hdots, v_m)$ such that:
$$\P_{v_1, \hdots, v_m \sim \D} \left( \L(\theta(\D); \hat \theta (v_1, \hdots, v_m)) \leq \epsilon \right) \geq 1-\delta, \quad \forall \D \in \C$$

Similarly, we can define the sample complexity as $\SaC_{\C,\theta}(\epsilon)$ as the minimum $m$ such that there exists a $(\epsilon,1/2)$-traditional-estimator for $\C$ and $\delta$. Since one can simulate a pricing estimator from a traditional estimator we have:
\begin{equation}
    \PrC_{\C,\theta}(\epsilon) \geq \SaC_{\C,\theta}(\epsilon)
\end{equation}
The main question we ask is how big is this gap? I.e. how much harder is it to estimate a parameter from pricing queries as compared to samples.

\paragraph{Practical Motivation and Relation to Bandits} In practice auctions are optimized by using a small fraction of traffic (say 1\%) to run experiments and the remaining 99\% runs the deployed production auction. After a period (say a few days) we use what we learned in the experimental slice to deploy in the main slice. Typically one does not care about the revenue in the experimental slice, but we do care about making that slice as small as possible. This translates directly to the goal of pricing query complexity: how to learn the most with the fewest possible queries?

The alternative to run a small experimental slice to collect data in order to optimize the main mechanism is to apply a bandit algorithm on the full traffic. This may be appealing in theory, but it is not viable in practice where upper bounds on experimental traffic is a reality. For one, bandit algorithms would lead to a more unstable system for the bidders. Furthermore, only learning in a small (typically random) experimental slice minimizes incentives for bidders to strategize against the learning algorithm, since any given query will only be used to learn a price for the next day with a very small probability.

This practical difference also explains how our objectives and guarantees differ from those of bandit learning for revenue optimization. While the bandit approach focuses on optimizing average revenue while learning, the sample/pricing complexity approach provides a guarantee on the loss of the final estimator with high probability, but doesn't care about the actual revenue obtained during the learning phase. This is very much in line with the literature on sample complexity.

\paragraph{Our Results} We give matching upper and lower bounds (up to $\text{polylog}(1/\epsilon)$ factors) for the pricing complexity of various parameters and classes of distributions. Our results are summarized in  Table~\ref{tab:results}. For each row of the table, we assume for the pricing complexity bounds that the parameter being estimated is in the range $[0,1]$. Since pricing queries only receive binary feedback, we need an initial range to be able to return any meaningful guarantee.

It is interesting to compare our bounds with the sample complexity for the same regime. Interestingly in many important cases, such as estimating the mean of a normal distribution or the monopoly price of a regular distribution, pricing queries don't pose any significant handicap when one looks at the asymptotic regime.  This is however not always the case: when computing the monopoly price of an MHR distribution, we establish a clear demarcation: we show a tight bound of $\widetilde{O}(1/\epsilon^2)$ for pricing complexity which is asymptotically higher than the known bound of $\widetilde{O}(1/\epsilon^{3/2})$ for sample complexity by Huang et al \cite{HMR18}. We also show a gap for computing the monopoly price for general distributions, while $\Omega(1/\epsilon^3)$ is required for pricing queries, $\widetilde{O}(1/\epsilon^2)$ is sufficient for value samples.


\begin{table}[h]
\centering
\begin{tabular}{ |c|c|c|c| }
\hline
Parameter & Distribution & Pricing Complexity & Sample Complexity \\
\hline
Median & General & $\tilde{O}(1/\epsilon^2)$, $ \Omega(1/\epsilon^2)$ & $O(1/\epsilon^2)$,$ \Omega(1/\epsilon^2)$ \\
Mean & Normal & ${O}(\sigma^2/\epsilon^2)$, $\Omega(\sigma^2/\epsilon^2)$ &  $O(\sigma^2/\epsilon^2)$, $\Omega(\sigma^2/\epsilon^2)$  \\
CDF 
& General & $\tilde{O}(1/\epsilon^3)$, $\Omega(1/\epsilon^3)$ &  $\tilde{O}(1/\epsilon^2)$, $\Omega(1/\epsilon^2)$  \\
CDF 
& Regular & $\tilde{O}(1/\epsilon^3)$, $\Omega(1/\epsilon^{2.5})$ &   $\tilde{O}(1/\epsilon^2)$  \\
Monopoly Price & MHR & $\tilde{O}(1/\epsilon^2), \Omega(1/\epsilon^2)$ & 
$\tilde{O}(1/\epsilon^{1.5}), \Omega(1/\epsilon^{1.5})$ \\
Monopoly Price & Regular & $\tilde{O}(1/\epsilon^2)$ , $\Omega(1/\epsilon^{2})$ & 
$\tilde{O}(1/\epsilon^{2})$ \\
Monopoly Price & General & $\tilde{O}(1/\epsilon^3)$, $\Omega(1/\epsilon^3)$ & 
$\tilde{O}(1/\epsilon^2)$, $\Omega(1/\epsilon^2)$
\\
\hline
\end{tabular}
\caption{Our upper and lower bounds for various estimators and comparison with sample complexity. In each case, the pricing complexity bounds assume that the parameter being estimated is in $[0,1]$. $\sigma$ is the standard deviation of the normal distribution.}
\label{tab:results}
\end{table}

\paragraph{Techniques: Mean and Median} As a warm up, we first study how to estimate the median of a distribution using pricing queries. The algorithm is a hybrid of binary search and the UCB algorithm. The algorithm both keeps a range $[\ell, r]$ that contains the median (as in binary search) and builds a confidence interval (like the UCB algorithm) to estimate the quantile of $p = (\ell + r)/2$. Whenever the confidence bound can safely separate the quantile of $p$ from $1/2$, a binary search step is performed. The algorithm requires very few queries to perform a binary search step that is far from the median, but  starts requiring more and more as we approach the median.

The algorithm is agnostic to $\epsilon$ and works in the streaming model: it keeps an estimate of the median through the execution that gets better as it receives more pricing query opportunities. Surprisingly, its performance matches what one could obtain from value samples.

We then use the algorithm to estimate the mean of a normal distribution. Remarkably, we obtain the same guarantee of $O(\sigma^2/\epsilon^2)$ (up to constant factors) as the optimal algorithm that has access to sample values. We obtain a guarantee of $\tilde{O}(\sigma^2/\epsilon^2)$ without needing to know $\sigma$ at all. To obtain a bound of $O(\sigma^2/\eps^2)$ the algorithm needs to know $\sigma$; or if the algorithm knows only an upper bound $\bar{\sigma}$ on $\sigma$, we can obtain a guarantee of $O(\bar{\sigma}^2/\eps^2)$.

\paragraph{Techniques: Monopoly Price} For the monopoly price our techniques are significantly more involved. At the heart of the analysis is a new property of regular distributions called \emph{relative flatness}, which is of potential independent interest. It says that if the revenue curve has roughly the same value at $4$ different equidistant points, it can't hide a point with very high revenue point in between those four points. This is not just a consequence of a single-peaked curve, as such curves can easily hide a high revenue point between four points of equal revenue. The concavity of the revenue curve in the quantile space is well known for regular distributions. The novelty is in finding the right property in the \emph{value space} to be used.

Using this property we design an algorithm that keeps a list of subintervals of $[0,1]$ to explore. Unlike binary search we are not able to keep a single interval. Once the algorithm explores an interval it may discard it, but it may also break it into multiple subintervals and add them to the list. The relative flatness is put to use in efficiently discarding intervals that don't contain the monopoly price. We design a potential function to control the number of sub-intervals to guarantee we explore a small number of them. The argument is delicate: even the next subinterval to pick needs care; picking an arbitrary subinterval to process does not yield the desired bound. This algorithm obtains a bound of $\tilde{O}(1/\epsilon^2)$ for regular distributions. 

We then turn our attention to MHR distributions. For value samples, it is known that learning the monopoly price for MHR distribution is easier than learning the monopoly price of regular distributions. Surprisingly, this is no longer the case for pricing complexity. To prove this fact, we design two distributions such that no price is an $\epsilon$-approximation simultaneously for both distributions even though the distributions have very close c.d.f. and quantiles. We then use an information-theoretic argument to show that the sequence of bits obtained by the pricing queries for both distributions will be indistinguishable with fewer than $\Omega(1/\epsilon^2)$ queries. We do so by bounding the KL-divergence between the distribution of bits observed by running any algorithm on both distributions.

\paragraph{Why Not Learn the CDF of the Whole Distribution?} Known results on sample complexity for revenue maximization typically use a variant of optimal reserve price of the empirical distribution. In the case of pricing queries, first, there is no clear notion of what the empirical distribution is. Second, even if one is tempted to learn the CDF of the entire distribution within an $\epsilon$ distance in Levy metric, we show that this is provably inferior (see Table~\ref{tab:results} for CDF Parameter) to the tight bounds we get. Our technique of quickly zooming into the right region of the distribution using the notion of relative flatness is crucial for getting tight bounds.

\paragraph{Related Work} 
The work that is closest to ours is the sample complexity of single buyer, single item revenue maximization by Huang et al.~\cite{HMR18}. Surprisingly Huang et al.~\cite{HMR18} show that the number of samples required to estimate the monopoly price to obtain a $1-\eps$ approximation to revenue is even smaller than the number of samples required to estimate the optimal revenue. Our detailed comparison of pricing query complexity with the sample complexity of Huang et al. has already been presented, and, the pricing complexity often matches the already surprisingly small sample complexity.

The literature on sample complexity has seen a lot of activity in the past decade. Particularly, in the single-parameter setting, starting with Elkind's work~\cite{Elkind07} on finite support auctions, there has been a lot of progress. Cole and Roughgarden~\cite{CR14} study the sample complexity of revenue maximization in single item settings with independent non-identical distributions, where a ``sample'' is an $n$-tuple when there are $n$ bidders in the auction. They give sample complexity as a polynomial in $n$ and $1/\eps$ when seeking to obtain a $1-\eps$ approximation to the optimal revenue. With this definition, in many settings the number of samples is not even a function of $n$. When one seeks a $1/4$ approximation, based on a generalization of Bulow and Klemperer's result~\cite{BK96} by Hartline and Roughgarden~\cite{HR09}, just a single sample would do. Fu et al.~\cite{fu2015randomization} shows that the revenue guarantees can be improved with randomization. When the distributions of the $n$ bidders are i.i.d., Dhangwatnotai et al.~\cite{DRY15} showed that poly($\eps$) samples are enough to get a $1-\eps$ approximation, and so is it in the case of unlimited supply (digital goods) setting as that can be boiled down to a single agent setting. Devanur et al.~\cite{DHP16} consider the single parameter setting where the buyer distributions are not in discrete categories, but in a continuum, represented as a signal in the hands of the auctioneer. The line of literature for single-parameter settings culminates with the work of Guo et al.~\cite{guo2019settling} which provides tight sample complexity for buyers with regular, MHR, or bounded values.  \cite{hu2021targeting} further studies a targeting sample complexity model where for each buyer distribution, it is allowed to specify a fixed size of quantile range to sample from.
Sample complexity results in multi-parameter settings are comparatively fewer. Morgenstern and Roughgarden~\cite{MR16} and Vasilis Syrgkanis~\cite{syrgkanis2017sample} study the sample complexity of learning simple auctions (item pricing, bundle pricing etc.) in multi-parameter settings via pseudo-dimension. Gonczarowski and Weinberg \cite{gonczarowski2018sample} obtains a polynomial sample complexity for up to $\eps$-optimal multi-item auctions for additive buyers with independent item values. Brustle et al. \cite{brustle2020multi} further extends the result to buyers with some specific types of correlated item values. Sample complexity of revenue-optimal item pricings was explored earlier by Balcan et al.~\cite{BBHM08}. Sample complexity of welfare-optimal item pricings was explored by Feldman et al.~\cite{FGL15} and Hsu et al.~\cite{HMRRV16}. 

Finally, we also note that pricing query complexity is different from sample complexity in PAC learning using threshold concepts i.e, the literature following the work of~\cite{valiant}, despite the resemblance in definitions. In PAC learning, the same concept may be used as a test on different samples. However, this is not the case in pricing complexity, here we never explicitly observe any sample. This means that the techniques and bounds for PAC learning are going to be closer to usual sample complexity results than to pricing complexity, as is borne out from the lower bounds.

\section{Warmup: Estimating the Median and the Mean}\label{sec:gaussian}
Before going to the technically more involved Section~\ref{sec:monopoly}, we begin with the estimation of the median and mean of distributions as a warmup. 
While the results in Section~\ref{sec:monopoly} are our main results, there are some nuances in this Section~\ref{sec:gaussian} as well, including the removal of log factor in Theorem~\ref{lm:norm}.
\subsection{Median of a General Distribution}
\label{sec:median-general}

We start with one of the most basic problems in statistics: estimating the median of a distribution. Let $C$ be the class of all distributions with finite variance and median in $[0,1]$. When we are estimating the median of a distribution $\D$ with cdf $F$, a natural loss function is $\L(\theta,\hat\theta)=|F(\theta)-F(\hat\theta)|$. Then we can bound the pricing query complexity as follows:

\begin{theorem}\label{thm:medianub}
Let $\C$ be a class of distributions over $\R$ with median in $[0,1]$. Define: $$a_F:=\sup_{F\in\C}\log{\textstyle|F^{-1}(\frac{1}{2}+\frac{\eps}{3})-F^{-1}(\frac{1}{2}-\frac{\eps}{3})|^{-1}}.$$ We have
\[\PrC_{\C,\theta}(\epsilon) \leq \tilde{O}(1/\epsilon^2)a_F\log a_F.\]
\end{theorem}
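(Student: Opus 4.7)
The plan is to present an algorithm that combines binary search in value space with UCB-style adaptive confidence estimation, and then bound its pricing query complexity. The algorithm maintains an interval $[\ell, r]$, initialized to $[0,1]$, that invariantly contains the median $m := F^{-1}(1/2)$. In each round it queries the midpoint $p := (\ell+r)/2$ adaptively, maintaining an empirical estimate $\hat F_n(p) = \frac{1}{n}\sum_{t \le n} \mathbf{1}\{s_t = -1\}$ together with a Hoeffding confidence radius $\beta_n = O(\sqrt{\log(K/\delta)/n})$, where $K$ is an \emph{a priori} bound on the number of rounds and $\delta = 1/2$. A round terminates in one of three ways: (a) the confidence interval $[\hat F_n(p)-\beta_n,\hat F_n(p)+\beta_n]$ lies strictly above $1/2$, in which case we bisect by setting $r \gets p$; (b) it lies strictly below $1/2$, in which case $\ell \gets p$; (c) it is contained in $[1/2 - \eps, 1/2 + \eps]$, in which case we halt and output $p$.

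Correctness follows from a union bound over the rounds: conditioned on the high-probability event that every confidence interval actually contains its true quantile, the bisection steps preserve the invariant $m \in [\ell,r]$, and the output $p$ of case (c) satisfies $|F(p)-1/2|\le\eps$ because $F(p)$ lies in the terminating confidence interval.

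The technical heart of the proof is bounding the number of rounds $K$ by $O(a_F)$. Each bisection halves the width of $[\ell,r]$, so after $O(a_F) = O(\log(1/w))$ bisection steps the width is at most $w := |F^{-1}(1/2+\eps/3) - F^{-1}(1/2-\eps/3)|$. I would then invoke a containment argument that uses the $\eps/3$-vs-$\eps$ slack built into the definition of $a_F$: once the width is $O(w)$ and the median sits somewhere in $[\ell,r]$, within $O(1)$ further bisections the midpoint must fall inside the $\eps$-band $[F^{-1}(1/2-\eps/2), F^{-1}(1/2+\eps/2)]$, triggering case (c). The per-round query cost is $n = O(\log(K/\delta)/\eps^2)$, the smallest $n$ achieving $\beta_n \le \eps/2$, at which point one of the three conditions must fire. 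Multiplying by $K = O(a_F)$ rounds and setting $\delta = 1/2$ yields the total $\tilde O(1/\eps^2)\,a_F \log a_F$; the $\log a_F$ factor comes from the union bound.

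The main obstacle I anticipate lies in the containment argument: showing that once $r - \ell = O(w)$ the midpoint of $[\ell,r]$ falls in the $\eps$-band around $m$ within $O(1)$ further bisections. The subtlety is that the density can be highly asymmetric around the median, so value-space distances of $O(w)$ do not automatically translate to $O(\eps)$ quantile distances on both sides of $m$. The $\eps/3$ slack in the definition of $a_F$ is what provides the buffer: if the midpoint happens to fall outside the $\eps$-band on one side, the CI must separate from $1/2$ on that side and the next bisection is forced to push the corresponding boundary toward $m$; a careful case analysis on which side the median lies within the $\eps/3$-flat region shows that after a constant number of such refinements the midpoint lands inside the $\eps$-band, completing the rounds bound.
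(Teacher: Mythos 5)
Your proposal matches the paper's proof: the same binary-search-plus-UCB algorithm, the same $O(a_F)$ round bound, and the same $\tilde O(\log(a_F/\delta)/\eps^2)$ per-round cost (the paper uses an adaptive $\log(t/\delta)$ confidence radius rather than your fixed $\log(K/\delta)$, which makes its variant $\eps$-agnostic and streaming-friendly, but the analysis is otherwise identical). The containment step you flag as the main obstacle---showing the midpoint enters the $\eps/3$-band within $O(a_F)$ bisections despite possible density asymmetry around the median---is also handled only tersely in the paper, which asserts it without the case analysis you anticipate needing.
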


Observe that the bound depends on a parameter $a_F$ which measures how concentrated the function is around the median. For example, if there is an $O(\epsilon)$ probability mass exactly
at the median then $a_F = \infty$. In such case, in order to obtain $\L(\theta,\hat\theta)\leq O(\epsilon)$ we would need to find the the median \emph{exactly} which is impossible only using pricing queries. Thus the dependence on $a_F$.\\

The bound in Theorem \ref{thm:medianub} is achieved by a combination of binary search and the UCB algorithm, described in Algorithm \ref{alg:binary_ucb}. The algorithm starts with a range $[0,1]$ of the potential median, and repeatedly prices at the middle point $p$ of the range. Since the algorithm obtains only binary feedback, it needs to keep sending pricing queries at this point until it has enough information to decide with high probability whether $Q(p) = 1- F(p) < 1/2$ or $Q(p) > 1/2$. To do that we keep an estimate of the quantile $Q(p)$ based on the binary feedback together with a confidence interval for this estimate. Once $1/2$ leaves this confidence interval we know with high probability on which side of $p$ the median is, and we can do a binary search step. After the binary search step, we restart the UCB step.  

\begin{algorithm}[htb]
\caption{Binary search algorithm that returns $p^*$ with $|Q(p^*)-\frac{1}{2}|\leq\eps$ with prob $1-\delta$ }
\label{alg:binary_ucb}
{Initialize $\ell =0$, $r=1$, $p=1/2$; $n=0$, $k=0$, $p^* =1/2$, $\epsilon^* = 1$\;
\For {t=1,2,3... (each new arriving query)} {
    Increment number of samples $n = n+1$ \;
    Price at $p$, if sold, increment $k= k+1$ \;
    If $\epsilon_p = 18\sqrt{\log(t/\delta)/ n} < \epsilon^*$ update $p^* = p, \epsilon^* = \epsilon_p$ \;
    \If {$\abs{1/2 - k/n} > 12\sqrt{\log(t/\delta)/ n}$ (i.e. $1/2$ is outside the confidence bound)} {
        If $k/n < 1/2$, update $r = p$. Otherwise, $\ell=p$\;
        Update $k=n = 0$ (reset the counters) and set $p =(\ell+r)/2$\;
    }
    \textrm{Return} $p^*$ if $\eps^*\leq\eps$\;
}
}
\end{algorithm}

One important feature of our algorithm is that it uses no information about $\epsilon$ other than as a stopping condition. In fact, one can think of the algorithm in the \emph{streaming model}: it processes a stream of `pricing opportunities' and updates a constant-size set of parameters in each step. At each point in time, the algorithm keeps an estimate $p^*$ together with a confidence bound $\epsilon^*$, that keeps getting better as the algorithm receives more data. The analysis of the pricing complexity is deferred to Section~\ref{sec:proof-median-general}.

\subsection{Mean of a Normal Distribution}
\label{sec:mean-normal}

We now apply this idea to estimate the mean of a normal distribution using pricing queries. The idea can be extended to other families of parametric distributions, but we focus here on normals to allow a crisp comparison with sample complexity. Our main result is the following price complexity upper bound, which exactly matches the well known sample complexity lower bound of $\Omega(\sigma^2/\epsilon^2)$:

\begin{theorem}\label{lm:norm}
Let $\C_{\sigma} = \left\{ \N(\mu, \sigma) \text{ s.t. } \mu \in [0,1]\right\}$ be the class of normal distributions with variance $\sigma^2$ with the loss $\L(\theta, \hat \theta) = \abs{\theta - \hat \theta}$. Then the pricing complexity of computing the mean is:
$$\PrC_{\C_{\sigma},\mu}(\epsilon)= O(\sigma^2/\epsilon^2).$$
\end{theorem}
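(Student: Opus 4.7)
The plan is to exploit the fact that for $\N(\mu, \sigma)$ the mean equals the median, and to combine a cheap coarse localization with a single-point Hoeffding refinement. The algorithm uses its knowledge of $\sigma$ to decide how many queries to spend in each phase and how to convert quantile precision into value precision.

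\textbf{Phase 1 (coarse localization).} I first localize $\mu$ to a window of size $O(\sigma)$. When $\sigma \geq 1$ this is free: the midpoint $1/2$ of $[0,1]$ is already within $\sigma$ of every $\mu \in [0,1]$. When $\sigma < 1$, I run a binary search on $[0,1]$ that at each step prices the current midpoint $p$ some $O(\log(1/\delta))$ times to form a constant-precision estimate $\hat Q(p)$ of $Q(p) = \P[v \geq p]$. If $|\hat Q(p) - 1/2|$ exceeds a fixed constant threshold, I branch into the appropriate half; otherwise $|\Phi((\mu - p)/\sigma) - 1/2|$ must be small, which for normals forces $|\mu - p| = O(\sigma)$, and I halt returning $\hat\mu_0 := p$. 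The procedure terminates after $O(\log(1/\sigma))$ iterations.

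\textbf{Phase 2 (fine estimation).} Given $\hat\mu_0$ with $|\hat\mu_0 - \mu| \leq C\sigma$ for a known constant $C$, I price $n = \Theta(\sigma^2/\epsilon^2)$ times at the single point $p^\star := \hat\mu_0$. Letting $\hat q$ denote the empirical ``did not sell'' frequency, Hoeffding's inequality gives $|\hat q - q^\star| = O(\epsilon/\sigma)$ with high probability, where $q^\star = \Phi((p^\star - \mu)/\sigma)$. I then output $\hat\mu := p^\star - \sigma\, \Phi^{-1}(\hat q)$. Because $(p^\star - \mu)/\sigma \in [-C, C]$, both $q^\star$ and $\hat q$ lie in a compact subinterval of $(0,1)$, so $\Phi^{-1}$ is $O(1)$-Lipschitz on that range, yielding $|\hat\mu - \mu| = \sigma\,|\Phi^{-1}(\hat q) - \Phi^{-1}(q^\star)| = O(\sigma \cdot |\hat q - q^\star|) = O(\epsilon)$ as desired.

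The main obstacle is shaving the logarithmic overhead that a direct appeal to Theorem~\ref{thm:medianub} would incur: there, translating the quantile precision $\Theta(\epsilon/\sigma)$ into value precision $\epsilon$ via the density $\Theta(1/\sigma)$ at the median picks up a $\log(1/\epsilon)$ factor through the union bound used by Algorithm~\ref{alg:binary_ucb}. Two design choices eliminate this: Phase 2 is a non-adaptive Hoeffding estimate at a \emph{single} price (so no union bound over prices or over adaptive times is needed), and Phase 1's cost $O(\log(1/\sigma))$ is either zero (when $\sigma \geq 1$) or absorbed into the $\sigma^2/\epsilon^2$ term in the regime where the theorem is nontrivial. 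Some additional care is needed in Phase 1 to ensure the stopping rule is correct both when it successfully shrinks the interval and when it detects that the current midpoint is already $O(\sigma)$-close to $\mu$, which is why the threshold test on $|\hat Q(p) - 1/2|$ must be calibrated using the standard-normal CDF's quantitative behavior near $0$.
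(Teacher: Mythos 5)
The proposal is correct and takes essentially the same two-phase approach as the paper: a coarse binary-search localization to a price $p$ with quantile in a constant central range (equivalently $|p-\mu|=O(\sigma)$), followed by a single-point Hoeffding estimate of $Q(p)$ to precision $\Theta(\epsilon/\sigma)$ and an inversion of the standard-normal CDF using its bounded derivative on that range. Your explicit remark that estimating at one fixed price avoids the union-bound overhead is precisely why the logarithmic factors drop, a point the paper's proof leaves implicit when it invokes Algorithm~\ref{alg:binary_ucb} with $\epsilon=1/4$ and then solves $\hat q = Q_{\hat\mu,\sigma}(p)$.
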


\begin{proof}
A direct application of Theorem \ref{thm:medianub}  leads to an algorithm with $\tilde{O}(\sigma^2/\epsilon^2)$ pricing query complexity. To estimate a point $\hat \theta$ such that $\abs{\hat \theta - \mu} \leq \epsilon$ for a distribution with mean $\mu$, it is enough to find a point $\hat\theta$ such that $\abs{F(\hat\theta) - F(\mu)} \leq \epsilon/\sigma$. For a normal distribution the parameter $a_F$ in Theorem \ref{thm:medianub} is $a_F = O(\log 1/(\epsilon \cdot \sigma))$. If $\sigma \geq \epsilon$, the statement of the theorem directly yields a  $\tilde{O}(\sigma^2/\epsilon^2)$. If $\sigma < \epsilon$ on the other hand, all the points $p \notin [\mu-\epsilon, \mu+\epsilon]$ are such that $\abs{Q(p) - 1/2} = \Omega(1)$ and hence the UCB step in Algorithm \ref{alg:binary_ucb} takes a constant number of queries at each of those points. Hence in $O(\log (1/\epsilon))$ we query a point that is $\epsilon$-close to the mean. This algorithm is robust in the sense that it doesn't require us to know the variance as it is only used in the analysis.\\

Assume for now that we know exactly what the variance $\sigma^2$ is.  If we allow the algorithm to use the exact value of the variance $\sigma^2$, we can improve the guarantee from  $\tilde{O}(\sigma^2/\epsilon^2)$ to $O(\sigma^2/\epsilon^2)$. We can apply the following procedure:
\begin{itemize}
    \item Use Algorithm \ref{alg:binary_ucb}  with $\epsilon = 1/4$ to find a point $p$ with quantile $Q_{\mu, \sigma}(p) \in [1/4, 3/4]$.
    \item Using $O(\sigma^2/\epsilon^2)$ pricing queries, find an estimate $\hat q$ such that $\abs{\hat q - Q_{\mu,\sigma}(p)} \leq \epsilon/\sigma$.
    \item Solve the equation $\hat q = Q_{\hat \mu, \sigma}(p)$ to find an estimate $\hat \mu$ for the mean.
\end{itemize}

Note that since $Q_{\hat \mu, \sigma}(p) = Q_{0,1}(\frac{p-\mu}{\sigma})$ where $Q_{0,1}$ is the quantile function for the standard Gaussian, the solution to the equation in last step is: $$\hat{\mu} = p - \sigma \cdot Q_{0,1}^{-1}(\hat q)$$
Since the derivatives of $Q_{0,1}^{-1}$ are bounded in $[1/4, 3/4]$ an error of $\epsilon/\sigma$ in the estimation $\hat{q}$ leads to an error of $O(\epsilon/\sigma)$ in $Q_{0,1}^{-1}(\hat q)$. Hence $\abs{\mu - \hat \mu} \leq O(\epsilon)$.\\
\end{proof}

We can relax the assumption that we know $\sigma$ exactly and assume we only know an upper bound $\bar \sigma$. The proof is deferred to Section~\ref{sec:proof-mean-normal}.

\begin{corollary}\label{cor:median-ub}
Let $\bar\C_{\bar\sigma} = \left\{ \N(\mu, \sigma) \text{ s.t. } \mu \in [0,1],\ \sigma\leq\bar\sigma\right\}$ be the class of normal distributions with variance at most $\bar\sigma^2$ with the loss $\L(\theta, \hat \theta) = \abs{\theta - \hat \theta}$. Then the pricing complexity of computing the mean is:
$$\PrC_{\bar\C_{\bar\sigma},\mu}(\epsilon)= O(\bar\sigma^2/\epsilon^2).$$
\end{corollary}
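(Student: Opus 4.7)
The plan is to extend the three-step procedure from the proof of Theorem \ref{lm:norm} so that it requires only the upper bound $\bar\sigma$ rather than the exact variance. The key idea is to probe at two anchor points instead of one, so that $\sigma$ can be solved for jointly with $\mu$ from the empirical quantiles at those two anchors, rather than being treated as a known parameter.

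First I would adapt Algorithm \ref{alg:binary_ucb} in a trivial way to seek points with a target quantile other than $1/2$, and run it twice with loose precision $\eps=1/8$ to locate $p_1$ with $Q_{\mu,\sigma}(p_1)\in[1/8,3/8]$ and $p_2$ with $Q_{\mu,\sigma}(p_2)\in[5/8,7/8]$. By the same analysis as in the proof of Theorem \ref{lm:norm}, each call uses $O(\log(1/\sigma))$ queries, and the degenerate regime $\sigma<\eps$ is already handled by the $O(\log(1/\eps))$-query plain-binary-search argument appearing in that proof. Next I would spend $O(\bar\sigma^2/\eps^2)$ additional pricing queries at each $p_i$ and let $\hat q_i$ be the empirical sale frequency, so that Hoeffding guarantees $|\hat q_i - Q_{\mu,\sigma}(p_i)|\le \eps/\bar\sigma$ with high probability. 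Finally, setting $\hat z_i = \Phi^{-1}(1-\hat q_i)$, I would solve the $2\times 2$ system $p_i = \hat\mu + \hat\sigma\,\hat z_i$ to recover
\[\hat\sigma = \frac{p_1-p_2}{\hat z_1-\hat z_2}, \qquad \hat\mu = p_1 - \hat\sigma\,\hat z_1.\]

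The hard part will be the error analysis for Step 3. The denominator $\hat z_1-\hat z_2$ is bounded below by an absolute positive constant because both $\hat q_i$ lie in the interior of $[1/8,7/8]$ where $\Phi^{-1}$ is strictly monotone, and the Lipschitz continuity of $\Phi^{-1}$ on that interval converts the quantile error $O(\eps/\bar\sigma)$ into an $O(\eps/\bar\sigma)$ error in each $\hat z_i$. The cancellation that closes the gap between $\sigma$ and $\bar\sigma$ comes from the observation that $|p_1-p_2|=O(\sigma)$ --- and not merely $O(\bar\sigma)$ --- because Step 1 already guarantees both $p_i$ lie within $O(\sigma)$ of $\mu$. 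This forces $|\hat\sigma-\sigma|$ down to $O(\sigma\eps/\bar\sigma)\le O(\eps)$, which is the crucial ingredient; propagating through $\hat\mu = p_1-\hat\sigma\hat z_1$ and using $|\hat z_1|=O(1)$ then yields $|\hat\mu-\mu|=O(\eps)$. The dominant cost is Step 2 at $O(\bar\sigma^2/\eps^2)$ queries, matching the claimed bound.
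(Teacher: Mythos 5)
Your proof is correct and takes essentially the same approach as the paper's: locate two anchor prices $p_1,p_2$ at interior quantiles, spend $O(\bar\sigma^2/\eps^2)$ queries at each, and solve the $2\times 2$ linear system $p_i=\hat\mu+\hat\sigma\,\Phi^{-1}(1-\hat q_i)$. Your error analysis is in fact a bit more explicit than the paper's terse one-line bound, correctly singling out the cancellation $|p_1-p_2|=O(\sigma)$ that turns the quantile error $\eps/\bar\sigma$ into a final error $O(\sigma\cdot\eps/\bar\sigma)=O(\eps)$.
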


\section{Estimating the Monopoly Price}
\label{sec:monopoly}
In this section, we focus on estimating the monopoly price (i.e., the revenue optimal price or the Myerson price) and measure our loss via the natural metric of the revenue gap between the true revenue and that obtained by our estimated price: $\L(\theta; \hat \theta) = \abs{\rev_\D(\theta) - \rev_\D(\hat \theta)}, \text{ where } \rev_\D(p) := \E_{v \sim \D} [p \cdot \one\{v \geq p\}]$, where $\theta$ and $\hat \theta$ are the true and estimated monopoly prices. 

An important tool in this section is the following lemma, which shows that to estimate the quantile $Q(p)=1-F(p)$ of price $p$ with additive error $\eps$, it suffices to use $\tilde{O}(\frac{1}{\epsilon^2})$ pricing queries on $p$. The proof is relatively standard, and is deferred to Section~\ref{sec:proof-monopoly}.

\begin{lemma}\label{lem:repeatprice}
For any value distribution $F$ supported in $[0,1]$ and value $p$, there is an algorithm $\textsf{EQ}_F(p,\epsilon,\delta)$ that makes $m=\tilde{O}(\frac{1}{\epsilon^2}\log\frac{1}{\delta})$ pricing queries and with probability $>1-\delta$ returns an estimate $\hat{q}$ for the quantile $Q(p)=1-F(p)$ such that $\abs{\hat q - Q(p)} < \epsilon$.

\end{lemma}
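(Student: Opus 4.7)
The plan is to take the most direct algorithm possible: query the single price $p$ a total of $m$ times, obtain the binary responses $s_1, \ldots, s_m$ where $s_t = \sign(v_t - p)$ for independent draws $v_t \sim F$, and return the empirical purchase frequency
\[ \hat q = \frac{1}{m} \sum_{t=1}^{m} \mathbf{1}\{s_t = +1\}. \]
Because the valuations are fresh i.i.d.\ draws from $F$, each indicator $\mathbf{1}\{s_t = +1\} = \mathbf{1}\{v_t \geq p\}$ is an independent Bernoulli with parameter $Q(p) = 1 - F(p)$, so $\E[\hat q] = Q(p)$ exactly, and the estimator is bounded in $[0,1]$ by construction.

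For the concentration step, I would invoke Hoeffding's inequality on these $[0,1]$-valued i.i.d.\ random variables to obtain
\[ \P\bigl[\, \abs{\hat q - Q(p)} \geq \epsilon \,\bigr] \leq 2 \exp(-2 m \epsilon^2). \]
Choosing $m = \lceil \frac{1}{2\epsilon^2} \ln(2/\delta) \rceil = O\!\bigl(\frac{1}{\epsilon^2} \log \frac{1}{\delta}\bigr)$ makes the failure probability at most $\delta$, which is well within the $\tilde O(\frac{1}{\epsilon^2}\log\frac{1}{\delta})$ budget claimed---in fact the bound matches with no polylog overhead, so the $\tilde O$ in the statement provides slack that is never invoked. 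There is no substantive obstacle in this argument: pricing at the same query point repeatedly is precisely the one situation in which the pricing-query model collapses to ordinary i.i.d.\ Bernoulli sampling, and none of the adaptive tools developed elsewhere in the paper (confidence-interval refinement, binary search, relative-flatness zooming) are required. The only minor care is to keep the Bernoulli parameter additive-error target $\epsilon$ separate from any multiplicative considerations, since for additive estimation Hoeffding is tight and does not require knowledge of the magnitude of $Q(p)$.
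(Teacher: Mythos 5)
Your proposal is correct and takes essentially the same approach as the paper: repeat the fixed price $p$, observe i.i.d.\ Bernoulli outcomes with mean $Q(p)$, and apply a standard concentration bound to the empirical frequency (you use Hoeffding, the paper states a Chernoff bound, but the resulting $O(\epsilon^{-2}\log(1/\delta))$ count is the same).
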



\subsection{Regular Distributions}\label{sec:regular}

Firstly we study the class of regular distributions $\reg$, supported on $[0,1]$. Regular distributions are the class of distributions for which the revenue curve in quantile space $\hat R(q) = q \cdot F^{-1}(1-q)$ is concave. This in particular implies that the revenue curve in the space of values/prices $\rev(p)=p(1-F(p))$  is single-peaked.
This class includes many important distributions such as uniform, exponential and all log-concave distributions.\\

In the heart of our algorithm will be the following lemma on regular distributions:

\begin{lemma}[Relative Flatness of Regular Distributions]\label{lem:myerson-average}
Let $\rev$ be the revenue curve of a regular distribution. 
Consider four equidistant\footnote{In the sense that $p_i = (p_4i + p_1(3-i))/3$ for $i=2,3$.} values $p_1<p_2<p_3<p_4=cp_1$ in $[0,1]$ and let $\rev_{\max}$ and $\rev_{\min}$ be the maximum and minimum value of the revenue curve at those four points. Then if $\rev_{\min} \geq \rev_{\max}-\eps$, for some $\eps>0$, then for any $p\in[p_1,p_4]$, $\rev(p)\leq \rev_{\max}+O(c\eps)$. 
\end{lemma}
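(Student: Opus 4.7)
The plan is to translate the problem into quantile space, where the revenue curve of a regular distribution becomes a concave function, and then to bound the maximum of this concave function on the image of $[p_1,p_4]$. Set $q_i := 1 - F(p_i)$ and $r_i := \hat R(q_i) = p_i q_i = \rev(p_i)$, where $\hat R(q) := q F^{-1}(1-q)$ is concave on $[0,1]$ by regularity. The $q_i$ are ordered $q_4 < q_3 < q_2 < q_1$, each $r_i$ lies in $[R_{\max} - \eps, R_{\max}]$ with $R_{\max} := \rev_{\max}$, and any $p \in [p_1,p_4]$ has $1 - F(p) \in [q_4,q_1]$. Hence it suffices to show $\max_{q \in [q_4,q_1]} \hat R(q) \leq R_{\max} + O(c\eps)$.

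The first step is a set of ratio bounds on the quantile gaps $a := q_1-q_2$, $b := q_2-q_3$, $c_q := q_3-q_4$. Using $q_i = r_i/p_i$ and $1/p_i - 1/p_{i+1} = \Delta/(p_i p_{i+1})$ with $\Delta := (p_4-p_1)/3$, each gap is sandwiched as
$$R_{\max}\frac{\Delta}{p_i p_{i+1}} - \frac{\eps}{p_{i}} \; \leq \; q_i - q_{i+1} \; \leq \; R_{\max}\frac{\Delta}{p_i p_{i+1}} + \frac{\eps}{p_{i+1}}.$$
In the main regime $R_{\max}\Delta \geq 2\eps p_4$, direct arithmetic yields $a/b \leq 2c+1$, $b/a \leq 3$, $b/c_q \leq 2c+1$, and $c_q/b \leq 3$.

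The second step is a slope argument using concavity. Let $q^*$ maximize $\hat R$ on $[q_4,q_1]$ and put $\eta := \hat R(q^*) - R_{\max}$; assume $\eta > 0$. If $q^* \in (q_2,q_1)$, then $\hat R$ increases by at least $\eta$ on $[q_2,q^*]$ whose length is $\leq a$, so by the mean value theorem some slope there is $\geq \eta/a$; by concavity every slope on $[q_3,q_2]$ is also $\geq \eta/a$, giving $\hat R(q_2) - \hat R(q_3) \geq b\eta/a$. Since the left-hand side is $\leq \eps$, we get $\eta \leq \eps a/b \leq (2c+1)\eps$. Symmetrically, $q^* \in (q_4,q_3)$ yields $\eta \leq \eps c_q/b \leq 3\eps$; $q^* \in (q_3,q_2)$ gives $\eta \leq \eps\min(b/c_q, b/a) \leq 3\eps$ via the slope argument on both sides; and $q^* \in \{q_1,q_2,q_3,q_4\}$ gives $\eta \leq 0$. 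Thus $\eta \leq O(c\eps)$.

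Finally, I would handle the degenerate regime $R_{\max}\Delta < 2\eps p_4$ directly: here $R_{\max} < 2\eps p_4/\Delta = 6c\eps/(c-1)$, so for any $p \in [p_1,p_4]$, $\rev(p) \leq p\cdot(1-F(p)) \leq p_4\cdot q_1 \leq p_4\cdot R_{\max}/p_1 = cR_{\max}$, whence $\rev(p) - R_{\max} \leq (c-1)R_{\max} < 6c\eps$. The two regimes combined complete the lemma. The main technical obstacle is the asymmetry in the ratio bounds: $a/b$ and $b/c_q$ carry a factor of $c$ while $b/a$ and $c_q/b$ do not, because $\Delta/(p_1p_2)$ is much larger than $\Delta/(p_3 p_4)$ when $p_4 \gg p_1$, and this asymmetry is exactly what forces the $c\eps$ term in the final bound.
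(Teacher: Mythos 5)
Your proof is correct and takes essentially the same approach as the paper: both pass to quantile space where $\hat R$ is concave, bound the ratios of adjacent quantile gaps using the equidistance of the $p_i$ together with the $\eps$-flatness of the revenues, and propagate secant slopes via concavity, with a separate handling of the degenerate regime where $\eps$ dominates the natural quantile gap. The only differences are cosmetic (you split cases on the location of the maximizer $q^*$, the paper splits on whether $q$ lies left or right of $(q_2+q_3)/2$, and the two degenerate-case thresholds differ by a harmless constant).
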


The geometric intuition is that is the curve is reasonably flat at four equidistant points, then it should be reasonably flat within most of that interval. Note that this is not generally true for any single-peaked function, as the peak could be easily hiding between any of those points. See Figure \ref{fig:flatness}. The proof (deferred to Section~\ref{sec:proof-regular}) will strongly use the fact that the revenue curve is concave when looked at in the quantile space.  From this point on, the only two facts we will use about regular distributions is the fact it is single-peaked and the relative flatness lemma.






\begin{figure}[h]
\centering
\begin{tikzpicture}[scale=1]

\draw (0,0) -- (5,0);

\node[circle,fill,inner sep=1pt] at (1,.45) {};
\node[circle,fill,inner sep=1pt] at (2,.48) {};
\node[circle,fill,inner sep=1pt] at (3,.5) {};
\node[circle,fill,inner sep=1pt] at (4,.5) {};

\draw (1,-.1)--(1,.1);
\draw (2,-.1)--(2,.1);
\draw (3,-.1)--(3,.1);
\draw (4,-.1)--(4,.1);

\draw [line width=1pt, color=blue] plot [smooth] coordinates { (0,.2) (1,.45) (2,.48)  (2.2,1) (2.5,2) (2.8,1) (3,.5)  (4,.5) (5,.2)};
\end{tikzpicture}
\caption{The relative flatness lemma shows that the revenue curve of a regular distribution can't hide a very high revenue point in a region that appears flat when measured with respect to $4$ equidistant points. For example, the curve in the figure can't be the revenue curve of a regular distribution in the value space.}
\label{fig:flatness}
\end{figure}
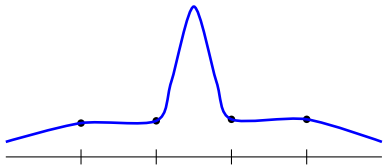

With that we can prove our main result:

\begin{theorem}\label{thm:regular}
Let $\reg$ be the class of regular distributions supported in $[0,1]$ and let $\theta$ be the monopoly price. Then,
$$\PrC_{\reg,\theta}(\epsilon) = \tilde{O}(1/\epsilon^2).$$
\end{theorem}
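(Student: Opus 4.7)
The plan is to combine Lemma~\ref{lem:myerson-average} with the single-peakedness of $\rev$ for regular distributions in an adaptive subdivision algorithm. We maintain a working set $\mathcal{S}$ of active subintervals of $[\epsilon,1]$, each of the form $[a,b]$ with $b/a \leq 2$ so Lemma~\ref{lem:myerson-average} applies with $c\leq 2$. Since $\rev(p) \leq p$, any $p \in [0,\epsilon]$ already has revenue at most $\epsilon$, so we may safely initialize $\mathcal{S}=\{[2^{-k-1},2^{-k}]: 0 \leq k \leq \lceil\log(1/\epsilon)\rceil\}$, giving $O(\log(1/\epsilon))$ initial intervals, and we track a running best estimated revenue $\tilde R^*$ with witness $p^*$.

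To process $[a,b] \in \mathcal{S}$, place four equidistant points $p_1 = a < p_2 < p_3 < p_4 = b$ and call $\textsf{EQ}_F(p_i,\epsilon,\delta')$ from Lemma~\ref{lem:repeatprice} at each, producing quantile estimates $\hat Q(p_i)$ accurate to $\epsilon$ and hence revenue estimates $\tilde R_i = p_i \hat Q(p_i)$ accurate to $\pm\epsilon$ at a cost of $\tilde O(1/\epsilon^2)$ queries per point. In the \emph{flat} case, when $\max_i \tilde R_i - \min_i \tilde R_i = O(\epsilon)$, Lemma~\ref{lem:myerson-average} gives $\rev(p) \leq \max_i \tilde R_i + O(\epsilon)$ on all of $[a,b]$, so we update $\tilde R^*$ and permanently discard the interval. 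In the \emph{non-flat} case, single-peakedness of $\rev$ confines its maximum on $[a,b]$ to a neighborhood of $p_{i^*} = \arg\max_i \tilde R_i$, so we replace $[a,b]$ by the sub-interval $[p_{\max(i^*-1,1)},p_{\min(i^*+1,4)}]$ of length at most $2(b-a)/3$, which still satisfies the ratio condition and is inserted back into $\mathcal{S}$. We schedule intervals greedily by their current revenue upper bound (trivially $b$ before processing; refined to $\max_i \tilde R_i + O(\epsilon)$ afterward) and halt once this upper bound falls below $\tilde R^* + \epsilon$.

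For the counting, each non-flat processing shrinks one interval of $\mathcal{S}$ by a factor of $2/3$ without increasing $|\mathcal{S}|$, and each flat processing strictly decreases $|\mathcal{S}|$, so each of the $O(\log(1/\epsilon))$ initial intervals contributes at most $O(\log(1/\epsilon))$ processings (its length cannot be multiplied by $2/3$ more than $\log(1/\epsilon)$ times before dropping below $\epsilon$, at which point flatness is automatic). This yields $O(\log^2(1/\epsilon))$ total processings, each costing $\tilde O(1/\epsilon^2)$ queries, for $\tilde O(1/\epsilon^2)$ total; a union bound with $\delta' = 1/\text{poly}(\log(1/\epsilon))$ is absorbed into the $\tilde O$.

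The hard part is making the non-flat branch genuinely collapse $[a,b]$ into a single sub-interval of length $\leq 2(b-a)/3$, instead of forcing us to keep two or three. When two or more of the $\tilde R_i$ are tied near the maximum (within the $O(\epsilon)$ measurement noise), naive reasoning would retain several thirds, blowing up $|\mathcal{S}|$ and breaking the counting. The resolution is to re-apply Lemma~\ref{lem:myerson-average} in a local form on any sub-window of the $p_i$ whose estimates are already $O(\epsilon)$-flat; this certifies that the revenue on the non-retained thirds is bounded by $\tilde R^* + O(\epsilon)$ and they can be safely dropped. The remaining delicacy, flagged in the introduction, is scheduling: processing the interval with the currently largest revenue upper bound is essential to guarantee that the potential $\Phi(\mathcal{S}) = \sum_{I\in\mathcal{S}} \log(\text{length}(I)/\epsilon)$ actually decreases at every iteration rather than simply being reshuffled.
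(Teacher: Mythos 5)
Your high-level plan — combine relative flatness with single-peakedness in a subdivision algorithm — is the right idea, and it is the paper's approach. But there is a genuine gap in the non-flat branch that the vague "resolution" paragraph does not actually close.

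\textbf{The non-flat reduction is unsound.} You claim that single-peakedness confines the optimum to $[p_{\max(i^*-1,1)},p_{\min(i^*+1,4)}]$ with $i^*=\arg\max_i \tilde R_i$. This fails when two adjacent estimates are tied to within the measurement noise. Suppose $i^*=1$, $\tilde R_2 \in (\tilde R_1 - 2\epsilon, \tilde R_1)$, and $\tilde R_3,\tilde R_4$ are far lower. You retain only $[p_1,p_2]$. But $\rev(p_2) > \rev(p_1)$ is fully consistent with the observed $\tilde R_1 > \tilde R_2$ once you account for $\pm\epsilon$ noise, and then single-peakedness places the true optimum anywhere in $[p_1,p_3]$, including $(p_2,p_3)$. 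If the observed span is, say, $\Theta(\sqrt{\epsilon})$, Lemma~\ref{lem:myerson-average} only bounds the interior by $\rev_{\max}+O(c\sqrt{\epsilon})$, so the missed optimum can exceed your running best by $\omega(\epsilon)$. The argmax rule genuinely loses the optimum. Your proposed fix — "re-apply Lemma~\ref{lem:myerson-average} in a local form on any sub-window of the $p_i$ whose estimates are already $O(\epsilon)$-flat" — cannot work as stated: the lemma requires four equidistant flat points, and a tied sub-window gives you only two or three. Querying extra interior points to manufacture four flat points is a different algorithm whose recursion and query count you have not analyzed.

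\textbf{The potential argument breaks once you keep more than one sub-interval.} The correct move (and what the paper does) is to retain several sub-intervals of $[a,b]$ in the ambiguous cases. But if one interval of length $L$ spawns two pieces of lengths $2L/3$ and $L/3$, your $\Phi(\mathcal{S})=\sum_I \log(\mathrm{length}(I)/\epsilon)$ changes by $\log(2L/(9\epsilon))$, which is positive for $L > 4.5\epsilon$, so $\Phi$ is not monotone. The paper instead caps the number of live intervals at three via its specific scheduling (process the second-leftmost interval, and in Cases 3/4/5 simultaneously \emph{delete} intervals to the left/right), and then uses the asymmetric potential $\phi = n^2 a_1 + n a_2 + a_3$ where $a_i = \lceil \log_{2/\sqrt{3}}(d_i/\epsilon) \rceil$ and $d_1\ge d_2\ge d_3$; this is carefully designed so that a unit decrease in a higher-weight coordinate dominates any increase in a lower-weight one. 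Your greedy-by-upper-bound scheduling has no analogous deletion mechanism, so even if you patched the non-flat branch to retain two pieces, the iteration count would not be controlled. Your geometric initialization with $b/a\le 2$ is a nice simplification that keeps $c$ bounded and avoids the paper's separate treatment of $[\ell,p_1]$ near $0$, but it does not rescue the counting.
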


In the next section we will show this is tight even for the subclass of MHR distributions.\\

Now we describe the algorithm achieving that bound. The main data structure the algorithm will keep is a list  $L$ which will contain intervals with disjoint interiors in the revenue curve that  still need to be explored. For each of the endpoints of those intervals, we will keep an $\pm \epsilon$-estimate $\hat{q}(p)$ of its quantile $Q(p) = 1-F(p)$ obtained by the algorithm in Lemma \ref{lem:repeatprice} with success probability $\delta=\eps^2$ together with a revenue estimate:
$$\wrev(p) = p \cdot \hat{q}(p).$$

In the algorithm, we will show that we only price at $\tilde{O}(1)$ different prices, thus by union bound the overall success probability is at least $\frac{2}{3}$ for small $\eps$.

We initialize the algorithm with $L = \{[0,1]\}$. While the list is non-empty, we will process it by removing an interval from it and analyzing it. In the process of analyzing it, we will query some of its points. Depending on the the outcome of those queries, we may decide to add subintervals of $L$ back to the list for further analysis. We stop whenever the list is empty. At that point, we return the point queried in the process with largest estimated revenue.

We now describe how to process each interval. At every round, we remove the second leftmost interval $[\ell, r]$ from the list whenever the list has more than one interval. If not, we remove the only interval from the list. We process this interval by dividing it in $4$ pieces of the same length and querying the breakpoints
$p_i = (i \ell + (4-i) r)/4$ for $i = 1,2,3$
for the approximate quantile and construct a revenue estimate $\wrev$ for each. We define the following quantities for the interval:
$$\rev_{\max} = \max_{p \in \{\ell, p_1, p_2, p_3, r\}} \rev(p), \qquad 
\wrev_{\max} = \max_{p \in \{\ell, p_1, p_2, p_3, r\}} \wrev(p) $$

Now we proceed according to one of seven cases:\\

\noindent\emph{Case 1:} If $r - \ell < \epsilon$ we discard the interval.

In that case, we don't need to further analyze the interval since  no point
$p\in[\ell,r]$ can generate much better revenue than just pricing at $\ell$ since:

\[\rev(\ell)=\ell Q(\ell)\geq \ell Q(p)\geq (p-\eps)Q(p)\geq pQ(p)-\eps=\rev(p)-\eps.\]

\noindent\emph{Case 2:} If $\min_{p \in \{\ell, p_1, p_2, p_3, r\}} \wrev(p) \geq \wrev_{\max} - 2\epsilon$. In that case, we add $[\ell,p_1]$ to the list. At this point, since $r\leq 4p_1$, the relative flatness lemma guarantees us that no point in $[p_1, \ell]$ can be much better than the points that were already queried.\\

\noindent\emph{Case 3:} If $\min_{p \in \{p_1, p_2, p_3, r\}} \wrev(p) \geq \wrev_{\max} - 2\epsilon$ but $\wrev(\ell) <  \wrev_{\max} - 2\epsilon$ we remove all intervals to the left of $\ell$ from the list and add $[\ell, p_1]$ and $[p_1,r]$.

In this case, we know that:
$$\rev(\ell) \leq \wrev(\ell) + \epsilon < \wrev_{\max}-\epsilon \leq \rev_{\max}$$
and hence the revenue in $\ell$ is suboptimal and the optimal point $p^*$ is to the right of $\ell$. Here we are only using the fact that the revenue function is single-peaked.\\

All the remaining cases follow from similar arguments based on the single-peakedness of the revenue function:\\

\noindent\emph{Case 4:} If $\min_{p \in \{\ell, p_1, p_2, p_3\}} \wrev(p) \geq \wrev_{\max} - 2\epsilon$ but $\wrev(r) <  \wrev_{\max} - 2\epsilon$ we remove all intervals to the right of $r$ from the list and add $[\ell, p_3]$ and $[p_3,r]$.
\\

\noindent\emph{Case 5:} If $\min_{p \in \{p_1, p_2, p_3\}} \wrev(p) \geq \wrev_{\max} - 2\epsilon$ but $\wrev(\ell), \wrev(r) <  \wrev_{\max} - 2\epsilon$ then remove all intervals from the list and add back $[\ell, p_1]$, $[p_1, p_3]$ and $[p_3, r]$. \\

\noindent\emph{Case 6:} If $\wrev(\ell), \wrev(p_1)< \wrev_{\max} - 2\epsilon$ we  remove all intervals to the left of $\ell$ from the list and add $[p_1,r]$ to the list. \\

\noindent\emph{Case 7:} If $\wrev(r), \wrev(p_3) <\wrev_{\max} - 2\epsilon$ we  remove all intervals to the right of $r$ from the list and add $[\ell,p_3]$ to the list.\\

Observe that by single-peakedness those are the only possible cases.
The algorithm always finishes since it replaces intervals by other intervals that are smaller by at least a constant factor and once intervals become of size $\epsilon$, they are discarded. 
What remains to be done is to analyze the query complexity of the algorithm:

\begin{proof}[Proof of Theorem \ref{thm:regular}] Now we argue the algorithm described performs at most $\tilde{O}(1/\epsilon^2)$ queries. To see that, we first claim that there are at most three intervals in the candidate list at any given time. We prove this claim by induction over the number of iterations. Suppose that after the $t$-th iteration there are $k\leq 3$ intervals in $list$. We argue that after the $(t+1)$-th iteration, there are still at most $3$ intervals in the list.
Observe that in Case 1, 2, 6, 7, the number of intervals does not increase. In Case 5, the number of intervals become 3 no matter how many intervals were in the list. In Case 3 or 4, the number of intervals increase by at most 1, so the number of intervals in the list becomes at most $3$ if $k\leq 2$. The only remaining case is that the algorithm runs through Case 3 or 4, with $list$ containing $k=3$ intervals at the beginning of the iteration. Since the algorithm picks the second leftmost interval from $list$, the algorithm picks the middle one of the intervals (sorted according to the left boundary). Notice that in Case 3, the leftmost interval is also removed in this iteration; in Case 4, the rightmost interval is also removed in this iteration. Thus after the algorithm runs through Case 3 and Case 4, no matter how many intervals are in the list at the beginning of the iteration, at most 3 intervals remain in the list after the iteration. This finishes the proof of the claim.

After each iteration, either one of the interval of length $d$ splits to multiple intervals of length at most $\frac{3}{4}d$ each, with some other intervals getting removed; or one of the interval of length $d$ shrinks to a new interval of length at most $\frac{3}{4}d$ (recall that our five prices $\ell,p_1,p_2,p_3,r$ chosen in the algorithm are equidistant). Therefore if we record the lengths of the three intervals to be $d_1\geq d_2\geq d_3$, then either $d_1$ decreases to a fraction of at most $\frac{\sqrt{3}}{2}$; or $d_1$ does not increase, and $d_2$ decreases to a fraction of at most $\frac{\sqrt{3}}{2}$; or $d_1,d_2$ do not increase, and $d_3$ decreases to a fraction of at most $\frac{3}{4}$. For $i=1,2,3$, let $a_{i}=0$ if $d_i=0$,  and $a_{i}=\lceil\log_{2/\sqrt{3}}\frac{d_i}{\eps}\rceil$ if $d_i>0$. Then at each step, either $a_1$ decreases by at least 1; or $a_2$ decreases by at least 1 while $a_1$ does not increase; or $a_3$ decreases by at least 1. Let $n=1+\lceil\log_{2/\sqrt{3}}\frac{1}{\eps}\rceil$, and denote a potential function $\phi(d_1,d_2,d_3)=n^2a_1+na_2+a_3$. Then at each step, since $a_i\leq n-1$, $\phi$ decreases by at least 1. As at the beginning of the algorithm $\phi(d_1,d_2,d_3)=\phi(1,1,1)\leq n^3$, we have the total number of iterations is at most $n^3=O(\log^3\frac{1}{\eps})=\tilde{O}(1)$. Since in each iteration the query complexity is $\tilde{O}(\frac{1}{\eps^2})$, the total query complexity of the algorithm is $\tilde{O}(\frac{1}{\eps^2})$.
\end{proof}


\subsection{MHR Distributions}\label{sec:mhr}

We complement the above result by providing a lower bound on the pricing complexity for estimating the monopoly price for MHR distributions.

\begin{theorem}\label{thm:mhrlb}
Let $\mhr$ be the class of Monotone Hazard Rate distributions supported in $[0,1]$ and let $\theta$ be the monopoly price. Then
$$\PrC_{\mhr,\theta}(\epsilon) = \Omega(1/\epsilon^2).$$
\end{theorem}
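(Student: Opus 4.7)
The plan is to prove this lower bound via Le Cam's two-point method. I construct two MHR distributions $D_1, D_2 \in \mhr$ supported on $[0,1]$ satisfying: (i) no single price $p$ is simultaneously $\epsilon$-optimal for both, i.e., for every $p \in [0,1]$ there is $b \in \{1,2\}$ with $\rev_{D_b}(\theta_b) - \rev_{D_b}(p) > \epsilon$; and (ii) $\abs{F_1(p) - F_2(p)} = O(\epsilon)$ uniformly in $p$, with both $F_b(p)$ bounded away from $0$ and $1$ on the interval where the two distributions differ. Given such a pair, the bound follows from a standard information-theoretic computation: each pricing query returns a Bernoulli sample of parameter $F_b(p)$, so by (ii) $\dkl(\mathrm{Ber}(F_1(p)) \Vert \mathrm{Ber}(F_2(p))) = O(\epsilon^2)$ at every $p$; the KL chain rule gives transcript divergence at most $O(m\epsilon^2)$ after $m$ adaptive queries; Pinsker bounds total variation by $O(\sqrt{m}\,\epsilon)$; and by (i) any $(\epsilon,1/2)$-pricing estimator must distinguish $D_1$ from $D_2$ with constant probability, forcing $m = \Omega(1/\epsilon^2)$.

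For the construction I would start from a smooth MHR base $F$ on $[0,1]$ with a unique interior monopoly price $p^*$, local quadratic behavior $\rev(p) = \rev(p^*) - c(p-p^*)^2 + o((p-p^*)^2)$, and strictly positive hazard-rate slope $h'(v) \geq c_0 > 0$ on a neighborhood of $p^*$ (a truncated Weibull-type density with shape parameter chosen large enough, rescaled so that the relevant interval lives strictly inside $(0,1)$, works). I then define $F_1 = F + \Delta F$ and $F_2 = F - \Delta F$ with $\Delta F(v) = \epsilon \cdot \phi((v-p^*)/\sqrt{\epsilon})$ for a fixed smooth compactly supported profile $\phi$ satisfying $\phi(0)=0$, $\phi'(0)>0$, and $\phi$ vanishing outside $[-1,1]$. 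The $L^\infty$-height $\Theta(\epsilon)$ of the bump yields (ii); its interior slope $\Delta F'(p^*) = \sqrt{\epsilon}\,\phi'(0) = \Theta(\sqrt{\epsilon})$ shifts the revenue first-order condition $\rev_b'(p)=0$ by $\Theta(\sqrt{\epsilon})$ in opposite directions for $D_1$ and $D_2$, and combined with the quadratic curvature this creates a $\Theta(\epsilon)$ revenue loss at every price sitting between the two shifted peaks, yielding (i).

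The delicate step is preserving MHR through the perturbation. Since $\Delta F'' = \Theta(1)$ in the bump region, the induced change in $h'(v) = f'(v)/(1-F(v)) + f(v)^2/(1-F(v))^2$ is of order $\Theta(1)$, which could destroy monotonicity of the hazard rate if the base's $h'$ is too small. I would handle this by selecting the base (for example by enlarging the Weibull shape) so that $c_0$ dominates this constant-scale, $\epsilon$-independent perturbation uniformly on a fixed neighborhood of $p^*$; since the required threshold depends only on the fixed profile $\phi$ and not on $\epsilon$, the required base can always be arranged, and smallness of $\epsilon$ makes all higher-order corrections negligible. Verifying that both $D_1, D_2$ are MHR on $[0,1]$ then completes the construction, and combined with the Le Cam argument above yields $\PrC_{\mhr,\theta}(\epsilon) = \Omega(1/\epsilon^2)$.
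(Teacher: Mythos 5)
Your proposal follows essentially the same route as the paper: construct two MHR distributions whose CDFs and quantiles differ by a multiplicative factor $1\pm O(\epsilon)$ yet whose monopoly prices are separated so that no single price is near-optimal for both, then bound the per-query KL-divergence by $O(\epsilon^2)$, apply the chain rule and Pinsker, and conclude $m=\Omega(1/\epsilon^2)$ — this is precisely the paper's Lemma~\ref{lem:query-complexity} together with the explicit two-distribution pair in its proof of Theorem~\ref{thm:mhrlb}. The only difference is the concrete construction (the paper perturbs a piecewise-uniform density on a window of width $\Theta(\sqrt{\epsilon})$, making the MHR check and the CDF-closeness immediate by direct calculation; you use an abstract smooth bump $\epsilon\,\phi((v-p^*)/\sqrt{\epsilon})$ on a generic smooth MHR base and argue that a sufficiently large hazard-rate slope on the base absorbs the $\Theta(1)$ change in $h'$ coming from $\Delta F''$), but the key idea — a height-$\epsilon$, width-$\sqrt{\epsilon}$ perturbation that shifts the revenue peak by $\Theta(\sqrt{\epsilon})$ and thereby costs $\Theta(\epsilon)$ revenue — is the same.
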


To prove lower bound results for the pricing complexity for estimating the monopoly price, a key observation is that for two distributions with almost identical cumulative density functions, it's very hard to distinguish them with pricing queries. 

\begin{lemma}\label{lem:query-complexity}
For two value distributions $D$ and $D'$, if $\frac{1}{1+\eps}\leq\frac{Q_D(v)}{Q_{D'}(v)},\frac{F_D(v)}{F_{D'}(v)} \leq (1+\eps)$ for every $v\in[0,1]$, then $\Omega(\frac{1}{\eps^2})$ pricing queries are needed to distinguish the two distributions with probability $>1-\delta$.
\end{lemma}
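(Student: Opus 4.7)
The natural approach is an information-theoretic argument bounding the KL divergence between the observation transcripts that any fixed adaptive algorithm produces when run against $D$ versus $D'$. Let $\A$ be an arbitrary (possibly randomized and adaptive) algorithm making $m$ pricing queries, and let $\mathbb{P}_D$ and $\mathbb{P}_{D'}$ denote the resulting distributions over transcripts $(s_1,\dots,s_m)\in\{-1,+1\}^m$. If $\A$ distinguishes $D$ from $D'$ with probability $>1-\delta$, then $\mathrm{TV}(\mathbb{P}_D,\mathbb{P}_{D'})\geq 1-2\delta$, and by Pinsker's inequality $\mathrm{TV}\leq\sqrt{\mathrm{KL}(\mathbb{P}_D\|\mathbb{P}_{D'})/2}$. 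So it suffices to show $\mathrm{KL}(\mathbb{P}_D\|\mathbb{P}_{D'})=O(m\eps^2)$, which then forces $m=\Omega(1/\eps^2)$.

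The first main step is the per-query KL bound. Fix a price $p\in[0,1]$ and write $p_D=F_D(p)$, $p_{D'}=F_{D'}(p)$. The hypotheses yield $|p_D-p_{D'}|\leq\eps\,p_{D'}$ and $|p_D-p_{D'}|\leq\eps\,(1-p_{D'})$, hence $(p_D-p_{D'})^2\leq\eps^2\,p_{D'}(1-p_{D'})$. Using $\log(1+x)\leq x$ twice gives the standard inequality
\[
\mathrm{KL}\bigl(\mathrm{Ber}(p_D)\,\|\,\mathrm{Ber}(p_{D'})\bigr)\;\leq\;\frac{(p_D-p_{D'})^2}{p_{D'}(1-p_{D'})}\;\leq\;\eps^2.
\]
Thus every single pricing query, regardless of which price the algorithm chooses, contributes at most $\eps^2$ of KL to the transcript.

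The second step is to assemble these bounds via the chain rule for KL divergence for adaptive protocols. For each $t$, conditional on the transcript prefix $s_{<t}$, the price $p_t$ is determined (or randomized independently of the underlying distribution) by $\A$, and the response $s_t$ is a Bernoulli whose parameter under $\mathbb{P}_D$ is $Q_D(p_t)$ and under $\mathbb{P}_{D'}$ is $Q_{D'}(p_t)$. Therefore
\[
\mathrm{KL}(\mathbb{P}_D\|\mathbb{P}_{D'})\;=\;\sum_{t=1}^{m}\mathbb{E}_{s_{<t}\sim\mathbb{P}_D}\!\Bigl[\mathrm{KL}\bigl(\mathrm{Ber}(Q_D(p_t))\,\|\,\mathrm{Ber}(Q_{D'}(p_t))\bigr)\Bigr]\;\leq\;m\eps^2,
\]
by the per-query bound above (applied to $1-F$ instead of $F$, using the hypothesis on $Q_D/Q_{D'}$).

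Finally, combining with Pinsker gives $1-2\delta\leq\mathrm{TV}(\mathbb{P}_D,\mathbb{P}_{D'})\leq\eps\sqrt{m/2}$, so $m=\Omega((1-2\delta)^2/\eps^2)=\Omega(1/\eps^2)$ for any constant $\delta<1/2$, which is what we need. The only substantive subtlety is the per-query KL bound, which requires using \emph{both} the ratio condition on $F$ and the ratio condition on $Q$ (one alone would only give $|p_D-p_{D'}|\leq\eps$, producing the loose bound $\mathrm{KL}\leq\eps$ rather than $\eps^2$); the chain-rule step is standard once one is careful that the queried price $p_t$ at step $t$ is a function of the transcript history (and possibly of the algorithm's internal randomness), which makes the conditional distribution of $s_t$ a well-defined Bernoulli under each of $\mathbb{P}_D$ and $\mathbb{P}_{D'}$.
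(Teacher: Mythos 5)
Your proof is correct and follows the same overall architecture as the paper's: bound the per-query KL contribution by $\eps^2$, assemble via the chain rule for KL divergence over the adaptive transcript, and close with Pinsker's inequality. The one place where your argument genuinely diverges is the per-query step. The paper proves a dedicated lemma (its Lemma on Bernoulli KL) by a somewhat lengthy extremal analysis: it fixes $q$, shows $f(q,q')$ is monotone in $q'$ away from the diagonal so the extremes $q'=(1\pm\eps)q$ suffice, and then maximizes over $q$ by differentiating again. You instead invoke the standard $\chi^2$-type bound
\[
\mathrm{KL}\bigl(\mathrm{Ber}(p)\,\|\,\mathrm{Ber}(q)\bigr)\leq\frac{(p-q)^2}{q(1-q)},
\]
and observe that the $F$-ratio hypothesis gives $|p-q|\leq\eps q$ while the $Q$-ratio hypothesis gives $|p-q|\leq\eps(1-q)$, so their product yields $(p-q)^2\leq\eps^2 q(1-q)$ and hence $\mathrm{KL}\leq\eps^2$ immediately. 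This is shorter, avoids the case analysis, has no side condition such as $\eps<\sqrt{2}/2$, and makes vivid exactly why \emph{both} ratio hypotheses are needed (each alone only gives $|p-q|\leq\eps$ and a $\mathrm{KL}\leq\eps$ bound, which is too weak). Your handling of the adaptive chain rule and the internal randomness of $\A$ matches the paper's (the paper marginalizes the random price into a single Bernoulli parameter; you note the conditional $s_t$ is a well-defined Bernoulli given the history and the algorithm's coins — both are fine, and one can also condition on the coins and use convexity of KL). In short: same proof skeleton, cleaner key lemma.
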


The proof of the lemma is technically involved, and is deferred to Section~\ref{sec:proof-mhr}. We first show that the two distributions have small KL-divergence, then use Pinsker's inequality from information theory to bound the pricing complexity. A similar approach was also used in the literature \cite{HMR18}.

Thus to prove the lower bound on the pricing complexity of estimating the monopoly price for MHR distributions, we only need to find two MHR distributions with close cumulative density functions satisfying the above lemma. Such distributions indeed exist, and we defer the construction to Section~\ref{sec:proof-mhr}.

\subsection{General Distributions}\label{sec:general-distribution}

We now provide the pricing complexity for the class of general distributions.

\begin{theorem}\label{thm:general-distribution}
Let $\alldist$ be the class of all value distributions on $[0,1]$ and $\theta$ be the monopoly. Then $$\PrC_{\alldist,\theta}(\epsilon)= \tilde{\Theta}(1/\epsilon^3).$$
\end{theorem}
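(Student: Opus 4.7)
The plan is a straightforward grid search. Discretize $[0,1]$ into $N = \lceil 1/\epsilon \rceil$ uniformly spaced prices $p_i = i\epsilon$ and, for each, invoke $\textsf{EQ}_F$ from Lemma~\ref{lem:repeatprice} to obtain an estimate $\hat q_i$ of $Q(p_i)$ with additive error at most $\epsilon/2$ and failure probability $\epsilon/(6N)$; this costs $\tilde{O}(1/\epsilon^2)$ queries per grid point and $\tilde{O}(1/\epsilon^3)$ queries in total. Return the grid price that maximizes $\wrev(p_i) = p_i \hat q_i$. For correctness, let $p^*$ be the true monopoly price and $\hat p$ the largest grid point below $p^*$. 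Since $Q$ is non-increasing and $Q(p^*) \leq 1$, one has $\rev(\hat p) \geq \hat p\, Q(p^*) \geq (p^* - \epsilon) Q(p^*) \geq \rev(p^*) - \epsilon$, so combining with the union-bounded $\pm\epsilon/2$ simultaneous accuracy of all estimates yields an $O(\epsilon)$-optimal returned price with probability at least $2/3$.

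\textbf{Lower bound construction.} For the matching $\Omega(1/\epsilon^3)$ bound I use a needle-in-a-haystack family. Let $D_0$ be the truncated equal-revenue distribution with quantile $Q_{D_0}(v) = 1/(2v)$ on $[1/2,1)$ and a point mass of $1/2$ at $v=1$; it has $\rev_{D_0}(p) = 1/2$ for every $p \in [1/2,1]$, density $f_{D_0}(v) = 1/(2v^2)$ bounded in $[1/2, 2]$, and quantile bounded away from $0$ and $1$ on the sub-interval $[3/4, 7/8]$. Partition $[3/4,7/8]$ into $N = \Theta(1/\epsilon)$ disjoint windows $W_i = [p_i - w, p_i + w]$ of half-width $w = \Theta(\epsilon)$ centered at prices $p_i$. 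Define $D_i$ by moving probability mass $\mu_i := 3\epsilon/p_i$ uniformly from $[p_i - w, p_i]$ to $[p_i, p_i + w]$ while leaving everything outside $W_i$ unchanged; boundedness of $f_{D_0}$ makes $D_i$ a valid distribution once $w$ is chosen a suitable constant times $\epsilon$. By construction $\rev_{D_i}(p_i) = p_i(Q_{D_0}(p_i) + \mu_i) = 1/2 + 3\epsilon$, while $\rev_{D_i}(p) = \rev_{D_0}(p) \leq 1/2$ for every $p \notin W_i$; hence any $\epsilon$-approximate monopoly of $D_i$ must lie inside $W_i$.

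\textbf{Lower bound averaging (the main obstacle).} To lift these local signals into a global bound I will combine a disjointness-of-outputs argument with Pinsker's inequality, in the spirit of Lemma~\ref{lem:query-complexity}. Fix any algorithm $A$ making $T$ queries, let $E_i$ be the event that $A$ outputs a price in $W_i$, and let $m_i$ be the number of queries it places in $W_i$. The events $E_i$ are pairwise disjoint, so $\sum_i \Pr_{D_0}[E_i] \leq 1$, and by Markov at least $N/2$ indices $j$ satisfy $\Pr_{D_0}[E_j] \leq 2/N$. For each such $j$, $(\epsilon,1/2)$-correctness on $D_j$ forces $\Pr_{D_j}[E_j] \geq 1/2$, hence the trace total variation satisfies $\textrm{TV}(P_{D_0}, P_{D_j}) \geq 1/2 - 2/N \geq 1/3$ for $N$ large. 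Since $D_0$ and $D_j$ agree outside $W_j$, only queries inside $W_j$ contribute to the trace-KL, and each contributes $O(\epsilon^2)$ because the pointwise quantile gap is $O(\epsilon)$ and $Q_{D_0}$ is bounded away from $0$ and $1$ on $W_j$; Pinsker then yields $\E_{D_0}[m_j] = \Omega(1/\epsilon^2)$ for each of the $\Omega(N)$ indices in question, and summing gives $T \geq \sum_j \E_{D_0}[m_j] = \Omega(N/\epsilon^2) = \Omega(1/\epsilon^3)$. The delicate part is simultaneously calibrating $w$, $\mu_i$, and $f_{D_0}$: an equal-revenue base is what permits $\Theta(1/\epsilon)$ genuinely plausible candidate monopolies, bounded density forces windows of width $\Theta(\epsilon)$ (so only $\Theta(1/\epsilon)$ fit), and these constraints conspire to produce exactly the $\Theta(\epsilon^2)$ per-query KL that drives the $\Omega(1/\epsilon^2)$ local cost.
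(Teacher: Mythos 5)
Your upper bound is essentially identical to the paper's: grid search at $\Theta(1/\epsilon)$ prices with $\tilde O(1/\epsilon^2)$ estimation queries per price, then report the empirical maximizer. Your lower bound is correct and conceptually the same ``needle in a haystack'' idea as the paper's, but with a different concrete construction and a more carefully spelled-out averaging step. The paper builds a discrete base distribution $F_0$ supported on $\Theta(1/\epsilon)$ grid points with equal revenue $1/4$ at each, and each $F_i$ raises the revenue by $\epsilon$ at exactly one grid point; it then argues informally that distinguishing $F_0$ from every $F_i$ requires $\Omega(1/\epsilon^2)$ queries ``at every price.'' You instead take a truncated equal-revenue \emph{continuous} base $D_0$ on $[1/2,1]$ and create needles by moving $\Theta(\epsilon)$ mass locally within $\Theta(1/\epsilon)$ disjoint windows of width $\Theta(\epsilon)$. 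The substantive improvement in your write-up is the averaging: the output events $E_j$ (``algorithm returns a price in window $W_j$'') are pairwise disjoint, so Markov gives $\Omega(N)$ indices $j$ with $\Pr_{D_0}[E_j]\leq 2/N$; correctness on $D_j$ forces $\Pr_{D_j}[E_j]\geq 1/2$, so $\mathrm{TV}(P_{D_0},P_{D_j})=\Omega(1)$; since $D_0$ and $D_j$ agree outside $W_j$, the chain-rule KL is $O(\epsilon^2)\cdot\E_{D_0}[m_j]$, and Pinsker gives $\E_{D_0}[m_j]=\Omega(1/\epsilon^2)$; summing over the $\Omega(N)$ indices and using $\sum_j m_j\leq T$ gives $T=\Omega(1/\epsilon^3)$. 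This makes rigorous the step the paper leaves at the level of a sketch (namely, that the per-needle cost of $\Omega(1/\epsilon^2)$ genuinely adds across the $\Omega(1/\epsilon)$ disjoint windows), while still relying on the same per-query KL bound of $O(\epsilon^2)$ from Lemma~\ref{lem:kl-bernoulli}. One small thing worth stating explicitly in your construction: the uniform mass-shift of $\mu_i=\Theta(\epsilon)$ over a window of half-width $w$ subtracts density $\mu_i/w$ on $[p_i-w,p_i]$, so you need $w\gtrsim 2\mu_i/\min f_{D_0}\approx 8\epsilon$ for nonnegativity, which still yields $\Theta(1/\epsilon)$ disjoint windows inside $[3/4,7/8]$.
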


The proof idea is as follows. On one hand, we can price at all multiples of $\eps$ to get an estimated revenue for each price. We can use $\tilde{O}(\eps^{-2})$ pricing queries to get an $\eps$ accuracy at each price, thus $\tilde{O}(\eps^{-3})$ queries can guarantee us a price that estimates the optimal revenue with error $O(\eps)$. On the other hand, for a distribution with the same revenue on all but one multiples of $\eps$, suppose that pricing at $i\eps$ gives a higher revenue than any $j\eps$ with $j\neq i$. Any pricing algorithm needs to price at every multiple of $\eps$ to determine the exact value of $i$, while $\Omega(\eps^{-2})$ queries are needed for accuracy. Thus $\Omega(\eps^{-3})$ queries are necessary for the general class of distributions. The detailed proof is deferred to Section~\ref{sec:proof-general}.

\subsection{Discussion on Learning the Entire Regular Distribution}

The algorithms with optimal sample complexity designed by Huang, Mansour and Roughgarden \cite{HMR18} work by using the samples to build an empirical distribution and then choosing the optimal reserve price with respect to the empirical distribution. We know by the DKW inequality that after $\tilde{O}(1/\epsilon^2)$ samples, the empirical distribution will be $\epsilon$-close to the real distribution in Kolmogorov distance with high enough probability. Hence for a bounded distribution, there is no asymptotic gap between the sample complexity of learning the monopoly price and learning the entire CDF. Is there a similar phenomenon for pricing queries?

In Section~\ref{sec:regular}, we discussed how to \textit{directly} learn the optimal monopoly prices with $\eps$ error for regular distributions using $\tilde{O}(1/\eps^2)$ pricing queries. Using these many pricing queries, we may wonder if we can get a full picture of the distribution: is it possible to learn a distribution within a small distance to the true distribution? There are multiple metrics measuring the distance of two distributions. Metrics that are measured by the pointwise difference in cumulative density functions like Kolmogorov distance\footnote{Two distributions with cumulative density functions $F$ and $G$ are within $\eps$ Kolmogorov distance, if $F(x)-\eps\leq G(x)\leq F(x)+\eps$ for every $x\in\R$.} are inappropriate when we only have pricing access to the distribution. For example, if a distribution is defined by a deterministic value in $[0,1]$, pricing queries cannot determine that specific value. Instead, we measure the distance of two distributions by Levy distance\footnote{Two distributions with cumulative density functions $F$ and $G$ are within $\eps$ Levy distance, if $F(x-\eps)-\eps\leq G(x)\leq F(x+\eps)+\eps$ for every $x\in\R$.}. 

Observe that if two distributions $F$ and $G$ are within $\eps$ Levy distance, then for any price $p\in[0,1]$, $\rev_F(p-\eps)\geq \rev_G(p)-O(\eps)$. This means that if $p_F$ is the optimal monopoly price for $F$, then $p_F-\eps$ is a good price for $G$ that achieves near-optimal revenue (with $O(\eps)$ revenue loss). Thus, learning a distribution within a small Levy distance is a harder problem than learning the optimal monopoly price. For a general distribution in $[0,1]$, the pricing query complexity for learning an approximate distribution is $\tilde{\Theta}(\eps^{-3})$, which is the same as the complexity for learning the optimal monopoly price, as we can estimate the quantile of each multiple of $\eps$ using $\tilde{O}(\eps^{-2})$ pricing queries.  

What about regular distributions? Can we also find a regular distribution within $O(\epsilon)$ Levy distance using only $\tilde{O}(1/\eps^2)$ pricing queries? We answer the question negatively by showing that $\Omega(1/\eps^{2.5})$ pricing queries are required. This in particular shows that our zooming procedure based on the relative flatness lemma is necessary to obtain optimal pricing query complexity bounds. This also provides another scenario (other than learning the monopoly price for an MHR distribution) where there is an asymptotic gap between the pricing query complexity and the sample complexity (which is $O(1/\eps^2)$ by DKW inequality).

\begin{theorem}\label{thm:regular-levy-pricing}
Let $\reg$ be the class of regular distributions supported in $[0,1]$ and let $\theta$ be the true distribution\footnote{The distance metric $|\theta-\theta'|$ of two distributions $\theta$ and $\theta'$ is measured by Levy distance.}. Then 
$$\PrC_{\reg,\theta}(\epsilon) = \Omega(1/\epsilon^{2.5}).$$
\end{theorem}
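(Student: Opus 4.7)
The plan is to prove the lower bound via a packing-and-averaging argument in the style of Lemma~\ref{lem:query-complexity}. I would exhibit a family $\{F_1,\dots,F_N\}$ of $N=\Theta(1/\sqrt{\epsilon})$ regular distributions on $[0,1]$ whose pairwise Levy distance exceeds $2\epsilon$ (so any $\epsilon$-Levy-accurate output must uniquely identify the index), but which differ from a common regular reference $F_0$ only on pairwise disjoint intervals $I_1,\dots,I_N$ each of width $\Theta(\sqrt{\epsilon})$. Combining a per-interval KL bound with the disjointness of the $I_i$ will then force $\Omega(N/\epsilon^2)=\Omega(\epsilon^{-2.5})$ total queries.

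For the family I would take a smooth strictly concave reference regular $F_0$ with density bounded above and below (e.g.\ a scaled uniform), place $N=\Theta(1/\sqrt{\epsilon})$ pairwise disjoint intervals $I_i=[x_i-w/2,\,x_i+w/2]$ of width $w=c\sqrt{\epsilon}$ inside the support, and let $F_i$ equal $F_0$ outside $I_i$. Inside $I_i$ I would smoothly perturb the density so that $F_i-F_0$ is a nonnegative ``hump'' vanishing on $\partial I_i$ and reaching a peak of height $\delta=\Theta(\epsilon)$ at $x_i$; the excess mass $\delta\cdot w=O(\epsilon^{3/2})$ is absorbed by a uniform density decrease far from $\cup_i I_i$. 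The peak density perturbation is then $\Delta f=O(\delta/w)=O(\sqrt{\epsilon})$ with derivative of order $O(\Delta f/w)=O(1)$, precisely the regime in which the virtual-value derivative $\phi'(v)=2+(1-F(v))f'(v)/f(v)^2$ stays nonnegative, so each $F_i$ is regular. For the Levy separation, $F_i(x_i)-F_j(x_i)=\delta$ for $i\neq j$, and because $w\gg\epsilon$ no $\epsilon$-horizontal shift can close this gap; a direct calculation using the Lipschitz constant of $F_0$ gives $\textsf{LevyDistance}(F_i,F_j)>2\epsilon$ and likewise $\textsf{LevyDistance}(F_0,F_i)>2\epsilon$ by choosing the constant in $\delta$ large enough.

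For the lower bound, let $P_0,P_1,\dots,P_N$ be the joint distributions over the length-$m$ sequence of (adaptive) pricing responses when the true distribution is $F_0,F_1,\dots,F_N$ respectively, and set $m_i:=\mathbb{E}_{P_0}[\#\{t:p_t\in I_i\}]$. Since the $I_i$ are pairwise disjoint, $\sum_{i=1}^N m_i\le m$. Queries at $p_t\notin I_i$ give identical Bernoulli marginals under $F_0$ and $F_i$, while queries at $p_t\in I_i$ produce Bernoulli parameters differing by at most $\delta=O(\epsilon)$ with quantile bounded away from $\{0,1\}$. The chain rule for KL divergence together with the standard estimate $\dkl(\text{Ber}(q)\|\text{Ber}(q+\Delta))=O(\Delta^2)$ then yields $\dkl(P_0\|P_i)\le O(\epsilon^2)\,m_i$. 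Because $F_0$ is regular and Levy-separated from $F_i$ by more than $2\epsilon$, any algorithm that produces an $\epsilon$-Levy-accurate estimate on both $F_0$ and each $F_i$ with probability $\ge 2/3$ must, by Le Cam's two-point method, satisfy $\text{TV}(P_0,P_i)=\Omega(1)$. Pinsker's inequality then forces $\dkl(P_0\|P_i)=\Omega(1)$ and hence $m_i=\Omega(1/\epsilon^2)$ for every $i$. Summing yields $m\ge \sum_i m_i=\Omega(N/\epsilon^2)=\Omega(\epsilon^{-2.5})$.

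The main obstacle will be the construction of the $F_i$: regularity is a global monotonicity constraint on $\phi$, and a naive piecewise-constant-density bump creates a downward jump in $\phi$ at the point where the density decreases, immediately breaking regularity. Smoothing the density transitions controls $|f'|$, but the required smoothness forces the bump width $w$ to satisfy $w^2\gtrsim\delta$, i.e.\ $w\gtrsim\sqrt{\epsilon}$. This ceiling is exactly what caps the admissible number of bumps at $N=O(\epsilon^{-1/2})$ and is the structural reason the lower bound in Theorem~\ref{thm:regular-levy-pricing} comes out at $\Omega(\epsilon^{-2.5})$ rather than at the ``bit-counting'' target $\Omega(\epsilon^{-3})$ that a regularity-free version of this argument would give.
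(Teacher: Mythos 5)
Your proposal is correct and follows essentially the same route as the paper: a packing of $\Theta(\epsilon^{-1/2})$ regular perturbations of the uniform distribution on pairwise disjoint intervals of width $\Theta(\sqrt{\epsilon})$, each separated by $\Omega(\epsilon)$ in Levy distance, combined with the $O(\epsilon^2)$ per-query KL bound and Pinsker's inequality. Your accounting via the expected per-interval query counts $m_i$ and the chain rule is a more explicit formalization of the step the paper states somewhat informally (``queries must be in $[x,x+4\sqrt{\epsilon}]$\dots, thus $\Omega(\epsilon^{-2.5})$ in total''), and your observation that regularity forces the CDF-bump width to satisfy $w\gtrsim\sqrt{\delta}$ is exactly the structural reason the paper's bumps have width $4\sqrt{\epsilon}$. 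One small inconsistency in your write-up: if $F_i-F_0$ is a nonnegative hump that vanishes on $\partial I_i$ and $F_i=F_0$ outside $I_i$, there is no ``excess mass'' to absorb elsewhere --- the perturbed density integrates to the same total over $I_i$ by construction (the paper's piecewise-linear density goes up, then down below the baseline, then back up, so the bump balances automatically); a far-field density correction would in fact break the disjointness your KL argument relies on, so it is good that it is unnecessary.
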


\begin{proof}
Let $D^*$ be the uniform distribution on $[0,1]$. In other words, the density function $f$ of $D^*$ satisfies $f(v)=1$ for every $v\in[0,1]$. For any $x\in[0.1,0.9]$, let $D_x$ be the distribution with the following density function $f_x$:
\begin{equation*}
     f_x(v) = \left\{\begin{array}{lr}
        1, & \text{if } v\in[0,x)\cup(x+4\sqrt{\epsilon},1];\\
        v-x+1, & \text{if } v\in[x,x+\sqrt{\epsilon}];\\
        -v+x+1+2\sqrt{\epsilon}, & \text{if } v\in(x+\sqrt{\epsilon},x+3\sqrt{\epsilon});\\
        v-x-4\sqrt{\epsilon}+1, & \text{if } v\in[x+3\sqrt{\epsilon},x+4\sqrt{\epsilon}].\\
        \end{array}\right. 
\end{equation*}
Intuitively, $D_x$ is obtained by perturbing the uniform distribution $D^*$ in a small interval $[x,x+4\sqrt{\epsilon}]$. In that interval, $f_x(v)$ linearly increases from $1$ to $1+\sqrt{\epsilon}$, then decreases gradually to $1-\sqrt{\epsilon}$, finally back to $1$. 

Observe that distribution $D_x$ is regular since Myerson's virtual value $\phi_x(v)=v-\frac{1-F_x(v)}{f_x(v)}$ is increasing even when $f_x(v)$ decreases. Note: the derivative of $\phi_x$ is $\phi_x’(v)=2+\frac{(1-F_x(v))f_x’(v)}{f_x^2(v)}$. The density of $D_x$ satisfies $1-\sqrt{\epsilon}\leq f_x(v)\leq 1+\sqrt{\epsilon}$ and $-1\leq f_x’(v)\leq 1$, thus $\phi_x’(v)>0$ when $\epsilon$ is small enough, which means $\phi_x(v)$ is monotone increasing. Now we analyze the number of pricing queries needed for distinguishing $D^*$ and $D_x$.
 
The cumulative density function of $D_x$ is
\begin{equation*}
     F_x(v) = \left\{\begin{array}{lr}
        v, & \text{if } v\in[0,x)\cup(x+4\sqrt{\epsilon},1];\\
        v+\frac{(v-x)^2}{2}, & \text{if } v\in[x,x+\sqrt{\epsilon}];\\
        v+\epsilon-\frac{(x+2\sqrt{\epsilon}-v)^2}{2}, & \text{if } v\in(x+\sqrt{\epsilon},x+2\sqrt{\epsilon});\\
        v+\epsilon-\frac{(v-x-2\sqrt{\epsilon})^2}{2}, & \text{if } v\in[x+2\sqrt{\epsilon},x+3\sqrt{\epsilon});\\
        F(v)=v+\frac{(x+4\sqrt{\epsilon}-v)^2}{2}, & \text{if } v\in[x+3\sqrt{\epsilon},x+4\sqrt{\epsilon}].\\
        \end{array}\right. 
\end{equation*} 
Observe that for any $v\in[x,x+4\sqrt{\epsilon}]$, $v\leq F_x(v)\leq v+\epsilon$; for any $v\in[0,x)\cup(x+4\sqrt{\epsilon},1]$, $F_x(v)=v$. This means that the cdf $F_{x}$ of distribution $D_x$ differs from the cdf $F^*$ of uniform distribution $D^*$ by at most an additive $\epsilon$ in value range $[x,x+4\sqrt{\epsilon}]$; for $v$ being out of the range, $F_{x}(v)=F^*(v)=v$. Thus for $x\in[0.1,0.9]$, $\frac{F_{x}(v)}{F^*(v)}=1+O(\epsilon)$ for every $v\in[0,1]$. For quantile functions $Q_x(v)=1-F_x(v)$ and $Q^*(v)=1-F^*(v)$, similar results hold: $\frac{Q_{x}(v)}{Q^*(v)}=1-O(\epsilon)$ for every $v\in[0,1]$. Thus $\Omega(\epsilon^{-2})$ pricing queries are necessary due to Lemma~\ref{lem:query-complexity}. Furthermore, the queries must be in $[x,x+4\sqrt{\epsilon}]$, as $Q_{D_x}$ and $Q_{D^*}$ are identical outside of $[x,x+4\sqrt{\epsilon}]$, which means querying outside of $[x,x+4\sqrt{\epsilon}]$ gives no information.
 
For any $x\in[0.1,0.9]$, to distinguish $D_x$ and $D^*$, $\Omega(\epsilon^{-2})$ pricing queries in $[x,x+4\sqrt{\epsilon}]$ are necessary due to Lemma~\ref{lem:query-complexity}. Now consider the following $\Omega(\epsilon^{-0.5})$ distributions: $D^*$, $D_{0.1}$, $D_{0.1+4\sqrt{\epsilon}}$, $D_{0.1+8\sqrt{\epsilon}}$, $\cdots$, $D_{0.9-8\sqrt{\epsilon}}$, $D_{0.9-4\sqrt{\epsilon}}$. To distinguish all these distributions, $\Omega(\epsilon^{-2})$ queries are necessary in every interval $[x,x+4\sqrt{\epsilon}]$, thus $\Omega(\epsilon^{-2.5})$ queries in total are necessary. Notice that any two distributions $D_x$ and $D^*$ are at least $\frac{\epsilon}{2}$-far apart in Levy distance because, $F_x(x+2\sqrt{\epsilon})=x+2\sqrt{\epsilon}+\epsilon=F^*(x+2\sqrt{\epsilon}+\epsilon/2)+\epsilon/2$. To learn a distribution that is within $\frac{\epsilon}{4}$ Levy distance from the true distribution, we need to be able to distinguish all those distributions. To summarize, even in the class of regular distributions, $\Omega(\epsilon^{-2.5})$ pricing queries are necessary for learning a distribution within $\frac{\epsilon}{4}$ Levy distance (thus also $\epsilon$ Levy distance).

\end{proof}


\bibliographystyle{plainnat}
\bibliography{estimator}

\appendix

\section{Missing Proofs from Section \ref{sec:gaussian}}\label{sec:proof-gaussian}

\subsection{Missing proofs from Section~\ref{sec:median-general}}
\label{sec:proof-median-general}

\begin{customtheorem}{\ref{thm:medianub}}
Let $\C$ be a class of distributions over $\R$ with median in $[0,1]$. Define: $$a_F:=\sup_{F\in\C}\log{\textstyle|F^{-1}(\frac{1}{2}+\frac{\eps}{3})-F^{-1}(\frac{1}{2}-\frac{\eps}{3})|^{-1}}.$$ We have
\[\PrC_{\C,\theta}(\epsilon) \leq \tilde{O}(1/\epsilon^2)a_F\log a_F.\]
\end{customtheorem}

\begin{proof}[Proof of Theorem \ref{thm:medianub}]

Algorithm \ref{alg:binary_ucb} searches for a price $p^*$ with $F(p^*)\in[\frac{1}{2}-\eps,\frac{1}{2}+\eps]$. The algorithm starts with a range $[0,1]$ of the potential median, and repeatedly price at the middle point $p=\frac{1}{2}$ of the range.
After $n$ queries, if the current acceptance rate of the price $p$ is $\bar{q}$, then the true quantile $q=Q(p)=1-F(p)$ is bounded in a confidence interval $[\bar{q}-\tilde{O}(n^{-1/2}),\bar{q}+\tilde{O}(n^{-1/2})]$ with high probability by Chernoff bound. 
If $\frac{1}{2}$ is outside the confidence interval after $n$ steps, assume that $\frac{1}{2}>\bar{q}+\tilde{O}(n^{-1/2})$, then with high probability $q=Q(p)<\frac{1}{2}$, and $|\frac{1}{2}-q|=\Omega(n^{-1/2})$.
The algorithm proceeds with a new range $[\frac{1}{2},1]$, and repeat the binary search process until it finds a price $p$ with $\Omega(n^{-1/2})<\eps$.\\

Consider any iteration with price $p$ and corresponding quantile $q=Q(p) := 1 - F(p)$. Assume that $q<\frac{1}{2}$, the case where $q>\frac{1}{2}$ is similar. Let $\alpha=6\sqrt{\log(t/\delta)/n}$. By Chernoff bound, after $n$ pricing queries on $p$, the probability that the selling rate $\bar{q}=\frac{k}{n}\not\in[q-\alpha,q+\alpha]$ is
\begin{eqnarray*}
\Pr[\bar{q}\not\in[q-\alpha,q+\alpha]]\leq 2\exp\left(-\frac{1}{2}n\alpha^2\right)=\frac{2\delta^3}{t^3}<\frac{\delta}{2t^3}
\end{eqnarray*}
when $\delta\leq\frac{1}{2}$. Thus by union bound, the probability that at any time the true quantile $q\in[\bar{q}-\alpha,\bar{q}+\alpha]$ is $>1-\sum_t\frac{\delta}{t^2}>1-\delta$. Thus with probability $1-\delta$, the algorithm never falsely searches a range $[\ell,r]$ that does not contain $p^*$. 

When $\bar{q}$ is not out of the confidence interval, we have $|\frac{1}{2}-\bar{q}|\leq 2\alpha$, while $|\bar{q}-q|\leq \alpha$ with probability $1-\delta$. Thus with probability $1-\delta$, $\eps_p=3\alpha\geq |\frac{1}{2}-q|$. Therefore $\eps_p$ is indeed an upper bound of $|\frac{1}{2}-q|$.

Now we analyze the pricing complexity of the algorithm. When the algorithm prices at $p$ with $q=Q(p)$ such that $|q-\frac{1}{2}|\leq \frac{1}{3}\eps$, we show that the algorithm always returns with this price. Consider the following two cases. If the algorithm observes $|\frac{1}{2}-\bar{q}|>2\alpha$, then since with probability $1-\delta$ $|\bar{q}-q|\leq\alpha$, we have $|\frac{1}{2}-q|>\alpha$. Then $\eps_p=3\alpha<3|\frac{1}{2}-q|\leq \eps$, which means the algorithm returns the current price $p$. Otherwise the algorithm observes $\eps_p\leq \eps$ and returns. Thus if the algorithm sets some price $p$ with $q=Q(p)$ satisfying $|q-\frac{1}{2}|\leq \frac{1}{3}\eps$, then the algorithm would always return with this price. 

Since the algorithm always finds a price $p$ with $|Q(p)-\frac{1}{2}|\leq \frac{1}{3}\eps$ in $a_F=\log\frac{1}{F^{-1}(\frac{1}{2}+\frac{\eps}{3})-F^{-1}(\frac{1}{2}-\frac{\eps}{3})}$ rounds of different prices. Let $n$ be the number of queries in the last round of pricing. Then $t\leq a_Fn$, and $18\sqrt{\frac{\log(a_Fn/\delta)}{n}}<\eps$. Therefore $n=\tilde{O}(\frac{1}{\eps^2})\log\frac{a}{\delta}$, which means the query complexity of the algorithm is at most $a\tilde{O}(\frac{1}{\eps^2})\log\frac{a_F}{\delta}$.

\end{proof}

\subsection{Missing proofs from Section~\ref{sec:mean-normal}}
\label{sec:proof-mean-normal}

\begin{customcorollary}{\ref{cor:median-ub}}
Let $\bar\C_{\bar\sigma} = \left\{ \N(\mu, \sigma) \text{ s.t. } \mu \in [0,1],\ \sigma\leq\bar\sigma\right\}$ be the class of normal distributions with variance at most $\bar\sigma^2$ with the loss $\L(\theta, \hat \theta) = \abs{\theta - \hat \theta}$. Then the pricing complexity of computing the mean is:
$$\PrC_{\bar\C_{\bar\sigma},\mu}(\epsilon)= O(\bar\sigma^2/\epsilon^2).$$
\end{customcorollary}

\begin{proof}[Proof of Corollary~\ref{cor:median-ub}]
use the following variant of the algorithm for Theorem~\ref{lm:norm}:
\begin{itemize}
    \item Use Algorithm~\ref{alg:binary_ucb}  with $\epsilon = \frac{1}{12}$ to find  $p_1$ and $p_2$ with $Q(p_1) \in [1/6, 2/6]$ and $Q(p_2) \in [4/6, 5/6]$, 
    \item Using $O(\bar \sigma^2/\epsilon^2)$ pricing queries, find estimates $\hat q_i$ such that $\abs{\hat q_i - Q_{\mu,\sigma}(p_i)} \leq \epsilon/\bar \sigma$. 
    \item Solve the system of equations $\hat{q}_1 = Q_{\hat \mu, \hat \sigma}(p_1)$ and $\hat{q}_2 = Q_{\hat \mu, \hat \sigma}(p_2)$ to find estimates of $\hat \mu, \hat \sigma$.
\end{itemize}
By the same argument as in the proof of Theorem~\ref{lm:norm}, the equations above can be re-written in linear form:
$$\hat \mu + \hat \sigma \cdot Q_{0,1}^{-1}(\hat q_1) = p_1 \qquad \hat \mu + \hat \sigma \cdot Q_{0,1}^{-1}(\hat q_2) = p_2$$
The actual mean and variance are solutions to the same equation with exact quantiles:
$$ \mu + \sigma \cdot Q_{0,1}^{-1}( q_1) = p_1 \qquad  \mu +  \sigma \cdot Q_{0,1}^{-1}( q_2) = p_2$$
Comparing those equations we can observe the following (we omit the details of the calculations as they are quite standard):
$$\abs{\mu - \hat \mu} \leq O\left( \sigma \abs{p_1 - p_2} \cdot \max_i \abs{Q_{0,1}^{-1}(\hat q_i) - q_i} \right) = O\left(\sigma \cdot \frac{\epsilon}{\bar \sigma}\right) = O(\epsilon)$$
This leads to an algorithm with $O(\bar \sigma^2 / \epsilon^2)$ pricing queries that only requires knowledge of a \emph{variance upper bound}.

\end{proof}

\section {Missing Proofs from Section \ref{sec:monopoly}}
\label{sec:proof-monopoly}


\begin{customlemma}{\ref{lem:repeatprice}}
For any value distribution $F$ and value $p$, let $Q(p)=1-F(p)$ be the probability that a random sample from $F$ is at least $p$. Then if we make $m=\tilde{O}(\frac{1}{\epsilon^2}\log\frac{1}{\delta})$ pricing queries with price $p$, then with probability $>1-\delta$ the price is accepted $m(Q(p)\pm\epsilon)$ times. 
\end{customlemma}
\begin{proof}[Proof of Lemma~\ref{lem:repeatprice}]
Let $q^*=Q(p)$ be the probability that $v\sim F$ is at least $p$. For any $m$ queries of price $p$, suppose that there are $t$ queries with signal ``$v\geq p$''. By Chernoff bound,
\begin{eqnarray*}
\Pr\left(t\not\in[m(q^*-\eps)]\right)\leq 2\exp\left(-\frac{\eps^2m}{2q^*}\right)\leq 2\exp\left(-\frac{\eps^2m}{2}\right).
\end{eqnarray*}
When $m=2\frac{1}{\eps^2}\log\frac{2}{\delta}=\Theta(\frac{1}{\eps^2}\log\frac{1}{\delta})$, the right hand side of the above inequality be at most $\delta$. Thus if $q^*\not\in[q-\eps,q+\eps]$ for some probability $q$, after $m=\Theta(\frac{1}{\eps^2}\log\frac{1}{\delta})$ pricing queries on $p$, with probability $<\delta$ the number of queries with signal ``$v\geq p$'' is in $[m(q^*-\eps),m(q^*+\eps)]$.
\end{proof}

\subsection{Missing proofs from Section~\ref{sec:regular}}
\label{sec:proof-regular}

\begin{customlemma}{\ref{lem:myerson-average}}[Relative Flatness of Regular Distributions]
Let $\rev$ be the revenue curve of a regular distribution. 
Consider four equidistant values $p_1<p_2<p_3<p_4=cp_1$ in $[0,1]$ and let $\rev_{\max}$ and $\rev_{\min}$ be the maximum and minimum value of the revenue curve at those four points. Then if $\rev_{\min} \geq \rev_{\max}-\eps$, for some $\eps>0$, then for any $p\in[p_1,p_4]$, $\rev(p)\leq \rev_{\max}+O(c\eps)$. 
\end{customlemma}
\begin{proof}[Proof of Lemma~\ref{lem:myerson-average}]
For each $i=1,2,3,4$, let $q_i=Q(p_i)$. Denote $q=Q(p)$. Since $p_1-\ell=p_2-p_1=p_3-p_2=p_4-p_3$, we have $p_1<p_4=cp_1$. 
If $\rev_{\max}<2\eps$, then for any $p\in[p_1,p_4]$, $\rev(p)=pQ(p)\leq p_4Q(p_1)=cp_1Q(p_1)\leq c\rev_{\max}=O(c\eps)$, thus $\rev(p)\leq\rev_{\max}+O(c\eps)$. 
Therefore we assume $\rev_{\max}\geq 2\eps$, then $\rev(p_1),\rev(p_2),\rev(p_3),\rev(p_4)\geq \frac{1}{2}\rev_{\max}$. Thus $q_1=\frac{1}{p_1}\rev(p_1)\leq \frac{c}{p_4}\cdot2\rev(p_4)=2cq_4$, in other words, all $q\in[q_4,q_1]$ are within a constant factor of $2c$.

Since $\rev$ is the revenue curve of a regular distribution, it is a concave function in quantile space. Consider the following two cases:

\noindent\emph{Case 1:} $p\geq Q^{-1}(\frac{q_2+q_3}{2})$. 
We can express $q_2$ as a linear combination of $q_1$ and $q$ by
$q_2=\frac{q_2-q}{q_1-q}q+\frac{q_1-q_2}{q_1-q}q_1$, thus
\[\rev(p_2)\geq \frac{q_2-q}{q_1-q}\rev(p)+\frac{q_1-q_2}{q_1-q}\rev(p_1).\]
Then
\begin{eqnarray}
\rev(p)&\leq&\frac{(q_1-q)\rev(p_2)-(q_1-q_2)\rev(p_1)}{q_2-q}\leq\frac{(q_1-q)(\rev(p_1)+\eps)-(q_1-q_2)\rev(p_1)}{q_2-q}\nonumber\\
&=&\rev(p_1)+\frac{q_1-q}{q_2-q}\eps=\rev(p_1)+\frac{q_1-q_2}{q_2-q}\eps+\eps\leq\rev(p_1)+\frac{q_1-q_2}{q_2-q_3}2\eps+\eps.\label{eqn:revp}
\end{eqnarray}

Here the last inequality is by $q\leq\frac{q_2+q_3}{2}$.

If $\eps>\frac{1}{2}q_3(p_3-p_2)$, then 
\[\eps>\frac{1}{2}q_3(p_3-p_2)\geq \frac{1}{2}\cdot\frac{1}{2c}q\cdot\frac{1}{2}(p-p_2)\geq\frac{1}{8c}(qp-q_2p_2)=\frac{1}{8c}(\rev(p)-\rev(p_2)).\]
Here the second inequality is by $q\geq\frac{1}{2c}q_3$, and $p-p_2\leq p_4-p_2=2(p_3-p_2)$; the third inequality is by $q\leq q_2$. 
Therefore $\rev(p)\leq\rev(p_2)+O(c\eps)=\rev_{\max}+O(c\eps)$ if $\eps>\frac{1}{2}q_3(p_3-p_2)$.

Now we assume that  $\eps\leq\frac{1}{2}q_3(p_3-p_2)$.
Since $\rev(p_1)\leq \rev(p_2)+\eps$, we have $p_1q_1\leq p_2q_2+\eps$, then $q_1-q_2\leq\frac{q_1(p_2-p_1)+\eps}{p_1}$. At the same time, $\rev(p_2)\geq\rev(p_3)-\eps$, we have $p_2q_2\geq p_3q_3-\eps$, then $q_2-q_3\geq\frac{q_3(p_3-p_2)-\eps}{p_2}$. Thus
\begin{eqnarray*}
\frac{q_1-q_2}{q_2-q_3}&\leq& \frac{p_2(q_1(p_2-p_1)+\eps)}{p_1(q_3(p_3-p_2)-\eps)}\leq \frac{p_2(q_1(p_2-p_1)+\frac{1}{2}q_3(p_3-p_2))}{p_1(q_3(p_3-p_2)-\frac{1}{2}q_3(p_3-p_2))}\\
&\leq&\frac{2(q_1+\frac{1}{2}q_3)}{(q_3-\frac{1}{2}q_3)}=2+\frac{4q_1}{q_3}\leq 2+8c=O(c).
\end{eqnarray*}
Here the second inequality is by $\eps\leq\frac{1}{2}q_3(p_3-p_2)$; the third inequality is by $p_2\leq 2p_1$ and $p_3-p_2=p_2-p_1$; the last inequality is by $q_1\leq 2cq_3$.
By \eqref{eqn:revp} we have $\rev(p)\leq \rev_{\max}+O(c\eps)$.

\noindent\emph{Case 2:} $p< Q^{-1}(\frac{q_2+q_3}{2})$. We can express $q_3$ as a linear combination of $q$ and $q_4$ by $q_3=\frac{q-q_3}{q-q_4}q+\frac{q_3-p_4}{q-q_4}q_4$. Thus by concavity of $\rev$ in the quantile space,
\[\rev(p_3)\geq \frac{q-q_3}{q-q_4}\rev(p)+\frac{q_3-q_4}{q-q_4}\rev(p_4).\]
Then
\begin{eqnarray}
\rev(p)&\leq&\frac{(q-q_4)\rev(p_3)-(q_3-q_4)\rev(p_4)}{q-q_3}\leq\frac{(q-q_4)(\rev(p_4)+\eps)-(q_3-q_4)\rev(p_4)}{q-q_3}\nonumber\\
&=&\rev(p_4)+\frac{q-q_4}{q-q_3}\eps=\rev(p_4)+\frac{q_3-q_4}{q-q_3}\eps+\eps<\rev(p_4)+\frac{q_3-q_4}{q_2-q_3}2\eps+\eps.\label{eqn:revp-case2}
\end{eqnarray}
Here the last inequality is by $q>\frac{q_2+q_3}{2}$.

If $\eps>\frac{1}{2}q_3(p_3-p_2)$, then as we have reasoned in Case 1, $\rev(p)\leq \rev_{\max}+O(c\eps)$. Now we assume that $\eps\geq\frac{1}{2}q_3(p_3-p_2)$. Since $\rev(p_3)\leq \rev(p_4)+\eps$, we have $p_3q_3\leq p_4q_4+\eps$, then $q_3-q_4\leq\frac{(p_4-p_3)q_4+\eps}{p_3}$. At the same time, as we have shown in Case 1, $q_2-q_3\geq\frac{q_1(p_3-p_2)-\eps}{p_2}$. Thus
\begin{eqnarray*}
\frac{q_3-q_4}{q_2-q_3}&\leq& \frac{p_2(q_4(p_4-p_3)+\eps)}{p_3(q_3(p_3-p_2)-\eps)}\leq \frac{p_2(q_4(p_4-p_3)+\frac{1}{2}q_3(p_3-p_2))}{p_3(q_3(p_3-p_2)-\frac{1}{2}q_3(p_3-p_2))}\\
&\leq&\frac{q_4+\frac{1}{2}q_3}{q_3-\frac{1}{2}q_3}=1+\frac{2q_4}{q_3}\leq 1+2=3.
\end{eqnarray*}
Here the second inequality is by $\eps\leq\frac{1}{2}q_3(p_3-p_2)$; the third inequality is by $p_2\leq p_3$ and $p_3-p_2=p_4-p_3$; the last inequality is by $q_4\leq q_3$.
By \eqref{eqn:revp-case2} we have $\rev(p)\leq \rev_{\max}+O(c\eps)$.

\end{proof}

\subsection{Mission proofs from Section~\ref{sec:mhr}}
\label{sec:proof-mhr}
In this section, we prove the pricing query complexity for estimating the monopoly price of MHR distributions.
\begin{customtheorem}{\ref{thm:mhrlb}}
Let $\mhr$ be the class of Monotone Hazard Rate distributions supported in $[0,1]$ and let $\theta$ be the monopoly price. Then
$$\PrC_{\mhr,\theta}(\epsilon) = \Omega(1/\epsilon^2).$$
\end{customtheorem}

Before proving the theorem, we start with a technical lemma bounding the KL-divergence of two Bernoulli variables.

\begin{lemma}\label{lem:kl-bernoulli}
For any $\eps<\frac{\sqrt{2}}{2}$ and two Bernoulli variables $X,Y$ with $\frac{1}{1+\eps}\leq \frac{\Pr[X=1]}{\Pr[Y=1]},\frac{\Pr[X=0]}{\Pr[Y=0]}\leq 1+\eps$, $\dkl(X\|Y)<\eps^2$.
\end{lemma}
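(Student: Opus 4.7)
The plan is to translate the ratio hypotheses into a two-sided additive bound on $|p-q|$, where $p = \Pr[X=1]$ and $q = \Pr[Y=1]$, and then dominate the Bernoulli KL-divergence by its $\chi^2$-divergence, which has a clean closed form for two-point distributions.

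First, I would rewrite the hypotheses in additive form. The condition $1/(1+\eps) \leq p/q \leq 1+\eps$ gives $|p - q| \leq q\eps$, and the symmetric condition $1/(1+\eps) \leq (1-p)/(1-q) \leq 1+\eps$ gives $|p - q| \leq (1-q)\eps$. Combining these yields the key inequality
\[
|p - q| \;\leq\; \min(q,\, 1-q)\cdot \eps.
\]
Using \emph{both} ratio bounds together is the conceptually essential step: dropping either one would only give $\eps^2/(1-q)$ or $\eps^2/q$ at the end, which can be arbitrarily large.

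Next, I would expand
\[
\dkl(X\|Y) \;=\; p\log(p/q) + (1-p)\log((1-p)/(1-q))
\]
and apply the elementary inequality $\log(1+x) \leq x$ (strict whenever $x\neq 0$) to each summand. After algebraic simplification — writing $p = q + \delta$ and expanding $p^2/q + (1-p)^2/(1-q) - 1$, the $O(\delta)$ terms cancel because both rows are probability distributions — the upper bound collapses to the $\chi^2$-divergence:
\[
\dkl(X\|Y) \;\leq\; \frac{(p-q)^2}{q(1-q)}.
\]

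Finally, I plug in the bound on $|p-q|$:
\[
\frac{(p-q)^2}{q(1-q)} \;\leq\; \frac{\min(q,1-q)^2\, \eps^2}{q(1-q)} \;=\; \frac{\min(q,1-q)}{\max(q,1-q)}\cdot \eps^2 \;\leq\; \eps^2.
\]
Strictness follows from one of two cases: either $p = q$ and $\dkl(X\|Y) = 0 < \eps^2$, or $p \neq q$ and the application of $\log(1+x) < x$ is already strict. The main obstacle is less technical than bookkeeping: carrying out the cancellation in the $\chi^2$-step cleanly, and recognizing up front that the two ratio hypotheses must be combined to harvest the $\min(q,1-q)$ factor that saves the day.
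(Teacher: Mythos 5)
Your proof is correct and takes a genuinely different route than the paper's. The paper fixes $q'=\Pr[Y=1]$, shows by a derivative computation that $\dkl$ is unimodal in $q=\Pr[X=1]$ with its minimum at $q=q'$, so the maximum over a relaxed constraint interval (using only the $q/q'$ ratio) is attained at an endpoint $q=q'(1+\eps)$ or $q=q'/(1+\eps)$; after a WLOG reduction to $q\leq 1/2$ it shows the endpoint value is increasing in $q$, evaluates at $q=1/2$, and finishes with $\ln(1+x)<x$, which is where the hypothesis $\eps<\sqrt{2}/2$ is finally needed (to ensure $\frac{\eps^2}{1-\eps^2}<2\eps^2$). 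You instead dominate $\dkl$ by the $\chi^2$-divergence via the termwise inequality $\log(1+x)\leq x$, convert the two multiplicative ratio hypotheses into the additive bound $|p-q|\leq\min(q,1-q)\,\eps$, and observe that $q(1-q)=\min(q,1-q)\max(q,1-q)$ collapses $\chi^2$ to $\frac{\min(q,1-q)}{\max(q,1-q)}\eps^2\leq\eps^2$. This is more elementary (no calculus, no WLOG step), it makes transparent why both sides of the ratio hypothesis are needed, and it never invokes $\eps<\sqrt{2}/2$ at all, so you in fact prove a slightly stronger statement than the paper. One small slip in your remark about the necessity of both ratio bounds: dropping one of them leaves you with $\frac{q}{1-q}\eps^2$ or $\frac{1-q}{q}\eps^2$, not $\eps^2/(1-q)$ or $\eps^2/q$; the conceptual point, that both bounds are required to harvest the saving factor $\min/\max\leq 1$, is nonetheless correct.
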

\begin{proof}[Proof of Lemma~\ref{lem:kl-bernoulli}]
Let $q=\Pr[X=1]$, $q'=\Pr[Y=1]$. Without loss of generality assume $q\leq \frac{1}{2}$ (otherwise just set $q\leftarrow 1-q$ and $q'\leftarrow 1-q'$). Let $f(q,q')=\dkl(X\|Y)$. Then
\begin{equation*}
    f(q,q')=\dkl(X\|Y)=q\ln\frac{q}{q'}+(1-q)\ln\frac{1-q}{1-q'}=q\ln q-q\ln q'+(1-q)\ln(1-q)-(1-q)\ln (1-q').
\end{equation*}
Then
\begin{equation*}
    \frac{\partial}{\partial q}f(q,q')=\ln q-\ln q'-\ln(1-q)+\ln(1-q')=\ln\frac{q}{q'}-\ln\frac{1-q}{1-q'}
\end{equation*}
has unique zero point $q=q'$, which means $f(q,q')$ increases when $q>q'$, and decreases when $q<q'$.
Therefore, to upper bound $f(q,q')$, it suffices to bound it for $q=(1+\eps)q'$ and $q=(1-\eps)q'$.

\paragraph{Case 1.} When $q=(1+\eps)q'$, 
\begin{eqnarray*}
f(q,q')&=&q\ln\frac{q}{q'}+(1-q)\ln\frac{1-q}{1-q'}\\
&=&q\ln(1+\eps)+(1-q)\ln(1-q)-(1-q)\ln\left(1-\frac{q}{1+\eps}\right).
\end{eqnarray*}
Take the derivative of the right hand side with respect to $q$, we have
\begin{eqnarray*}
\frac{\partial}{\partial q}f\left(q,\frac{q}{1+\eps}\right)&=&\ln(1+\eps)-1-\ln(1-q)+\ln\left(1-\frac{q}{1+\eps}\right)+(1-q)\frac{1}{1+\eps}\frac{1}{1-\frac{q}{1+\eps}}\\
&=&(\ln(1+\eps)-1)+\ln\frac{1-\frac{q}{1+\eps}}{1-q}+\frac{1-q}{1+\eps-q}>0.
\end{eqnarray*}
The inequality is true since each term in the sum is non-negative. Thus $f(q,\frac{q}{1+\eps})$ is maximized when $q=\frac{1}{2}$, and
\begin{eqnarray*}
f\left(q,\frac{q}{1+\eps}\right)\leq f\left(\frac{1}{2},\frac{1}{2+2\eps}\right)
=\frac{1}{2}\ln(1+\eps)+\frac{1}{2}\ln\frac{\frac{1}{2}}{1-\frac{1}{2+2\eps}}=\frac{1}{2}\ln\frac{(1+\eps)^2}{1+2\eps}<\frac{1}{2}\frac{\eps^2}{1+2\eps}<\eps^2,
\end{eqnarray*}
here the second inequality is by $\ln(1+x)<x$ for any $x>0$.

\paragraph{Case 2.} When $q=\frac{1}{1+\eps}q'$, 
\begin{eqnarray*}
f(q,q')&=&q\ln\frac{q}{q'}+(1-q)\ln\frac{1-q}{1-q'}\\
&=&-q\ln(1+\eps)+(1-q)\ln(1-q)-(1-q)\ln\left(1-(1+\eps)q\right).
\end{eqnarray*}
Take the derivative of the right hand side with respect to $q$, we have
\begin{eqnarray*}
\frac{\partial}{\partial q}f\left(q,(1+\eps)q\right)&=&-\ln(1+\eps)-1-\ln(1-q)+\ln\left(1-(1+\eps)q\right)+(1-q)(1+\eps)\frac{1}{1-(1+\eps)q}\\
&=&-\ln\frac{(1-q)(1+\eps)}{1-(1+\eps)q}-1+\frac{(1-q)(1+\eps)}{1-(1+\eps)q}\geq0.
\end{eqnarray*}
The inequality follows from $-\ln x-1+x\geq0$ for any $x\geq 0$. Thus $f\left(q,(1+\eps)q\right)$ is maximized when $q=\frac{1}{2}$, and 
\begin{eqnarray*}
f\left(q,(1+\eps)q\right)\leq f\left(\frac{1}{2},\frac{1+\eps}{2}\right)
=-\frac{1}{2}\ln(1+\eps)+\frac{1}{2}\ln\frac{\frac{1}{2}}{1-\frac{1}{2}(1+\eps)}=\frac{1}{2}\ln\frac{1}{1-\eps^2}<\frac{1}{2}\frac{\eps^2}{1-\eps^2}< x^2,
\end{eqnarray*}
here the second inequality is by $\ln(1+x)<x$ for any $x>0$, and the last inequality is by $\eps>\frac{\sqrt{2}}{2}$.

\end{proof}

The lemma is then used to give a lower bound on the pricing query complexity of distinguishing two distributions whose c.d.f. $F(p)$ and quantiles $Q(p) = 1-F(p)$ are both within $1\pm \epsilon$ of each other.

\begin{customlemma}{\ref{lem:query-complexity}}
For two value distributions $D$ and $D'$, if $\frac{1}{1+\eps}\leq\frac{Q_D(v)}{Q_{D'}(v)},\frac{F_D(v)}{F_{D'}(v)} \leq (1+\eps)$ for every $v\in[0,1]$, then $\Omega(\frac{1}{\eps^2})$ pricing queries are needed to distinguish the two distributions with probability $>1-\delta$.
\end{customlemma}

The proof idea of Lemma~\ref{lem:query-complexity} is as follows. For two such distributions $D$ and $D'$, and any pricing algorithm $\A$, let $X_i$ and $X'_i$ be the bit each algorithm receives from the $i$-th pricing query. To show that $\A$ needs $m=\Omega(\eps^{-2})$ queries to distinguish the two distribution, it suffices to show that $(X_1,\cdots,X_m)$ and $(X'_1,\cdots,X_m)$ has $\Omega(1)$ statistical distance (thus KL-divergence $\Omega(1)$) only if $\Omega(\eps^{-2})$. By Lemma~\ref{lem:kl-bernoulli} the gain of relative entropy from each query is at most $\eps^2$, thus $\Omega(\eps^{-2})$ queries are needed for the KL-divergence to become $\Omega(1)$. 

\begin{proof}[Proof of Lemma~\ref{lem:query-complexity}]
Consider any algorithm $\A$ that adaptively sets a pricing query in each step. 
For every $i\geq 1$, let $X_{i}$ be a boolean random variable that denotes whether the $i$-th sampled value $v_i\sim D$ is at least the price $p_{i,D}$ at the $i$-th step in algorithm $\A$ for value distribution $D$.
In other words, $X_i=\one[v_i\geq p_{i,D}]$. Similarly, define $X'_{i}$ to be a boolean random variable that denotes whether the $i$-th sampled value $v'_i\sim D'$ is at least the price $p_{i,D'}$ at the $i$-th step in algorithm $\A$ for value distribution $D'$. For each $n\geq 1$, denote $\X_{\leq n}=(X_1,X_2,\cdots,X_n)$ and $\X'_{\leq n}=(X'_1,X'_2,\cdots,X'_n)$.

For any price $p\in[0,1]$, random values $v\sim D$ and $v'\sim D'$, and random variables $X=\one[v\geq p]$ and $Y=\one[v'\geq p]$, let $q=\Pr[X=1]$ and $q'=\Pr[Y=1]$. The Kullback–Leibler divergence of $X$ and $Y$ can be bounded by Lemma \ref{lem:kl-bernoulli}.

Then for any fixed $(x_1,\cdots,x_{m-1})\in\{0,1\}^{m-1}$, conditioned on $(X_1,\cdots,X_{m-1})=(x_1,\cdots,x_{m-1})$ algorithm $\A$ sets a random price $p$; $X_{m}$ is a Bernoulli variable with probability $q=\E_{p}[\Pr_{v\sim D}[v\geq p]]=\E_{p}[Q_D(p)]$ being 1; $X'_{m}$ is a Bernoulli variable with probability $q'=\E_{p}[\Pr_{v\sim D'}[v\geq p]]=\E_{p}[Q_{D'}(p)]$ being 1. By the condition of Lemma~\ref{lem:query-complexity}, $\frac{1}{1+\eps}<\frac{q}{q'},\frac{1-q}{1-q'}<1+\eps$. Thus by Lemma~\ref{lem:kl-bernoulli},
\begin{equation*}
    \dkl\bigg(X_m|_{\X_{\leq m-1}=(x_1,\cdots,x_{m-1})}\bigg\|X'_m|_{\X'_{\leq m-1}=(x_1,\cdots,x_{m-1})}\bigg)<\eps^2.
\end{equation*}
For any $n\geq 1$, let $\x_{\leq n}=(x_1,x_2,\cdots,x_n)$. Then by the chain rule of KL divergence,
\begin{eqnarray*}
\dkl(X_{\leq m}\|X'_{\leq m})
&=&\dkl(X_{\leq m-1}\|X'_{\leq m-1})+\E_{\X_{\leq m-1}}\dkl\bigg(X_m|_{\X_{\leq m-1}=\x_{\leq m-1}}\bigg\|X'_m|_{\X'_{\leq m-1}=\x_{\leq m-1}}\bigg)\\
&<&\dkl(X_{\leq m-1}\|X'_{\leq m-1})+\eps^2<\dkl(X_{\leq m-2}\|X'_{\leq m-2})+2\eps^2<\cdots<m\eps^2.
\end{eqnarray*}


Let $D_m$ and $D'_m$ be the distribution of $\X_{\leq m}$ and $\X'_{\leq m}$ respectively. The probability that algorithm $\A$ can identify the distribution with $m$ samples is at most $\frac{1+\delta(D_m,D'_m)}{2}$, here $\delta(A,B)=\frac{1}{2}\int |f_A(x)-f_B(x)|dx$ represents the statistical distance of two distributions $A$ and $B$ with density $f_A$ and $f_B$ respectively. Thus to identify distribution $D$ and $D'$ with probability $1-\delta$, the number of samples $m$ need to be large enough such that $\delta(D_m,D'_m)=\Omega(1)$. By Pinsker's inequality,
\begin{equation*}
    \delta(D_m,D'_m)\leq\sqrt{\frac{1}{2}\dkl(D_m,D'_m)}<\sqrt{\frac{m}{2}\eps^2}.
\end{equation*}
Thus when $\delta(D_m,D'_m)=\Omega(1)$, $\sqrt{\frac{m}{2}\eps^2}=\Omega(1)$. Thus $m=\Omega(\frac{1}{\eps^2})$ samples are necessary to distinguish $D$ and $D'$ with probability $>\frac{2}{3}$ for any (randomized adaptive) algorithm $\A$.
\end{proof}

Finally, we are ready to analyze the pricing query complexity for estimating the monopoly price for MHR distributions.

\begin{proof} [Proof of Theorem~\ref{thm:mhrlb}]
We consider the following MHR distribution $D$ with sample complexity $\Omega(\eps^{-3/2})$ that is modified from an example in Huang et al \cite{HMR18}. Let $\eps_0=16\eps$, and $f$ be the density of $D$ as follows:
\begin{equation*}
    f_D(v)=\begin{cases}
        0.4 &\textrm{ if } 0\leq v< \frac{1}{2};\\
        1.6-3.2\sqrt{\eps_0}&\textrm{ if } \frac{1}{2}\leq v< \frac{1}{2}+\frac{1}{2}\sqrt{\eps_0};\\
        1.6+\frac{3.2\eps_0}{1-\sqrt{\eps_0}}&\textrm{ if } \frac{1}{2}+\frac{1}{2}\sqrt{\eps_0}\leq v\leq 1.
    \end{cases}
\end{equation*}
Let $f_{D'}(v)$ be the density of distribution $D'$ as follows:
\begin{equation*}
    f_{D'}(v)=\begin{cases}
        0.4 & \textrm{ if } 0\leq v< \frac{1}{2};\\
        1.6 & \textrm{ if } \frac{1}{2}\leq v\leq 1.
    \end{cases}
\end{equation*}
In other words, distribution $D'$ is uniform on $[0,\frac{1}{2}]$ and $[\frac{1}{2},1]$, while distribution $D$ is a perturbation of $D'$ on $[\frac{1}{2},1]$, with the density of $D'$ in $[\frac{1}{2},\frac{1}{2}+\frac{1}{2}\sqrt{\eps_0}]$ scaled down by a factor of $1-2\sqrt{\eps_0}$, and the density of $D'$ in $[\frac{1}{2}+\frac{1}{2}\sqrt{\eps_0},1]$ scaled up by a factor of $1+\frac{2\eps_0}{1-\sqrt{\eps_0}}$ in $D$. Let $F_D$ and $F_{D'}$ be the cumulative density function of $D$ and $D'$ respectively, and let $Q_D(v)=1-F_D(v)$, $Q_{D'}(v)=1-F_{D'}(v)$. Then for any $v\in[0,1]$, $Q_{D'}(v)\leq Q_{D}(v)\leq \left(1+\frac{2\eps_0}{1-\sqrt{\eps_0}}\right)Q_{D'}(v)$. In other words, when setting any pricing query over the two distributions, the probability that the buyer can afford to purchase the item differs by only a factor of $1+\frac{2\eps_0}{1-\sqrt{\eps_0}}<1+3\eps_0$ for small enough $\eps_0$.

For uniform distribution $D'$, the optimal monopoly price is $p_{D'}^*=\frac{1}{2}$, with revenue $0.4$. Only prices in $p\leq p_0=\frac{1}{2}+\sqrt{\eps}$ would lead to revenue $\geq0.4-1.6\eps$. For uniform distribution $D$, the optimal monopoly price is $p_D^*=\frac{1}{2}+\frac{1}{2}\sqrt{\eps_0}$, with revenue $0.4+0.4\eps_0+0.8\eps_0^{3/2}=0.4+6.4\eps+51.2\eps^{3/2}>p^*_{D'}Q_{D'}(p^*_{D'})+\eps$. For $p\leq p_0$, the revenue is
\begin{eqnarray*}
pQ_D(p)&\leq& p_0Q_D(p_0)=(0.5+\sqrt{\eps})(0.8-1.6\sqrt{\eps}(1-2\sqrt{\eps_0}))
=0.4-1.6\eps+1.6\sqrt{\eps\eps_0}+3.2\eps\sqrt{\eps_0}\\
&=&0.4+4.8\eps+12.8\eps^{3/2}<p^*_{D}Q_D(p^*_D)-\eps.
\end{eqnarray*}
Thus no price can get $\eps$-close to the optimal revenue for both distribution $D$ and $D'$ at the same time. 

Consider a distribution that is one of $D$ and $D'$. To learn the optimal monopoly price or the optimal revenue of the distribution, one must be able to distinguish the two distributions with pricing queries. Observe that for any $v\leq\frac{1}{2}$, $F_D(v)=F_{D'}(v)$ and $Q_D(v)=Q_{D'}(v)$, thus $\frac{F_D(v)}{F_{D'}(v)}=\frac{Q_D(v)}{Q_{D'}(v)}=1$. For any $v\in(\frac{1}{2},1]$, 
\begin{equation*}
    1\geq \frac{F_D(v)}{F_{D'}(v)}\geq\frac{F_D(\frac{1}{2}+\frac{1}{2}\sqrt{v_0})}{F_{D'}(\frac{1}{2}+\frac{1}{2}\sqrt{v_0})}=1-\frac{1.6\eps_0}{0.2+0.8\sqrt{\eps_0}}=\frac{1}{1+O(\eps)}
\end{equation*}
and 
\begin{equation*}
    1\leq \frac{Q_D(v)}{Q_{D'}(v)}\leq\frac{Q_D(\frac{1}{2}+\frac{1}{2}\sqrt{v_0})}{Q_{D'}(\frac{1}{2}+\frac{1}{2}\sqrt{v_0})}=  1+\frac{1.6\eps_0}{0.8-0.8\sqrt{\eps_0}}=1+O(\eps)
\end{equation*}
by $\eps_0=16\eps$. Thus for any $v\in[0,1]$, $\frac{1}{1+O(\eps)}\leq \frac{F_D(v)}{F_{D'}(v)},\frac{Q_D(v)}{Q_{D'}(v)}\leq 1+O(\eps)$. By Lemma~\ref{lem:query-complexity} the number of queries needed to distinguish $D$ and $D'$ is $\Omega(\frac{1}{\eps^2})$, which is also a lower bound on the pricing complexity of estimating the monopoly price for MHR distributions.

\end{proof}

\subsection{Missing proofs from Section~\ref{sec:general-distribution}}\label{sec:proof-general}

\begin{customtheorem}{\ref{thm:general-distribution}}
Let $\alldist$ be the class of all value distributions on $[0,1]$ and $\theta$ be the monopoly. Then $$\PrC_{\alldist,\theta}(\epsilon)= \tilde{\Theta}(1/\epsilon^3).$$
\end{customtheorem}

\begin{proof}[Proof of Theorem~\ref{thm:general-distribution}]
Firstly we show that $\PrC_{\alldist,\theta}(\epsilon)= \tilde{O}(1/\epsilon^3).$ Actually, consider the following simple algorithm: price at every multiple of $\eps$ in $[0,1]$ for $\tilde{O}(\frac{1}{\eps^2})$ times such that for every such price $p$, $Q(p)$ is estimated by $\tilde{Q}(p)\in Q(p)\pm\eps$ (by Lemma~\ref{lem:repeatprice}). Notice that for price $p^*$ that maximizes the revenue $\rev(p^*)=p^*Q(p^*)$, the revenue from price $p=\eps\lfloor\frac{p^*}{\eps}\rfloor$ is $\rev(p)=pQ(p)\geq (p^*-\eps)Q(p^*)\geq \rev(p^*)-\eps$. Since the algorithm estimate $Q(p)$ with additive error $\eps$, thus $\wrev(p)\geq \rev(p)-\eps\geq \rev(p^*)-2\eps$. This means that the price $\hat{p}$ being a multiple of $\eps$ that maximizes $\hat{p}\tilde{Q}(\hat{p})$ estimates the optimal revenue with error $O(\eps)$.

Now we show that the $\Omega(\frac{1}{\eps^3})$ pricing complexity is unavoidable. Consider the following $m=\frac{1}{16\eps}$ distributions $F_0, F_1,\cdots,F_{m}$. Distribution $F_i$ has support $\frac{1}{2}+4\eps$, $\frac{1}{2}+8\eps$, $\cdots$, $\frac{3}{4}-4\eps$, $\frac{3}{4}$, with the item-pricing revenue and quantiles satisfying
\begin{equation*}
    \rev_i\left(\frac{1}{2}+4k\eps\right)=\left(\frac{1}{2}+4k\right) \cdot Q_i\left(\frac{1}{2}+4k\right)=\begin{cases}
    \frac{1}{4},&\textrm{ if }k\neq i;\\
    \frac{1}{4}+\eps,&\textrm{ if }k=i.
    \end{cases}
\end{equation*}
In other words, for $i\geq 1$, each distribution $F_i$ has a unique revenue-maximizing price $\frac{1}{2}+4i\eps$ with revenue $\frac{1}{4}+\eps$, while other prices leads to revenue $\frac{1}{4}$ that is $\eps$-far from the optimal revenue. $F_0$ is an equal-revenue distribution with revenue $\frac{1}{4}$ for every price $\frac{1}{2}+4k\eps$. Thus the quantile of any two distributions $F_0$ and $F_{i}$ only differs at one point: $Q_i(\frac{1}{2}+4i\eps)=Q_{0}(\frac{1}{2}+4i\eps)+\Theta(\eps)$. For any other price $p=\frac{1}{2}+4k\eps$ with $k\neq i,j$, $Q_i(p)=Q_0(p)$. Then for any $i\in[\frac{m}{3},\frac{2m}{3}]$, since $Q_i(\frac{1}{2}+4i\eps)=\Theta(1)$ and $F_i(\frac{1}{2}+4i\eps)=\Theta(1)$, $\frac{1}{1+O(\eps)}\leq \frac{Q_i(v)}{Q_0(v)},\frac{F_i(v)}{F_0(v)}\leq 1+O(\eps)$. By Lemma~\ref{lem:query-complexity}, to distinguish $F_i$ and $F_0$,  $\Omega(\frac{1}{\eps^2})$ pricing queries on $\frac{1}{2}+4i\eps$ are needed.

Consider a setting where the underlying value distribution is $F_i$ for $i$ uniformly selected from $[\frac{1}{3}m,\frac{2}{3}m]$ but unknown beforehand. To find out the optimal revenue and the corresponding price, it is equivalent to find out the underlying value distribution. To distinguish $F_0$ and $F_i$, at least $\Omega(\frac{1}{\eps^2})$ pricing queries are needed on price $\frac{1}{2}+4i\eps$. Thus to distinguish $F_0$ and all other distributions, at least $\Omega(\frac{1}{\eps^2})$ pricing queries are needed on every price $\frac{1}{2}+4i\eps$, thus the query complexity is $\Omega(\frac{1}{\eps^3})$.
\end{proof}

\end{document}